\PassOptionsToPackage{draft}{hyperref}

\documentclass[sn-mathphys-num]{sn-jnl}

\usepackage[T1]{fontenc} 
\usepackage{lmodern}     
\usepackage[scr=rsfs]{mathalfa} 

\usepackage{graphicx}%
\usepackage{multirow}%
\usepackage{amsmath,amssymb,amsfonts}%
\usepackage{amsthm}%
\usepackage{mathrsfs}%
\usepackage[title]{appendix}%
\usepackage{xcolor}%
\usepackage{textcomp}%
\usepackage{manyfoot}%
\usepackage{booktabs}%
\usepackage{algorithm}%
\usepackage{algorithmicx}%
\usepackage{algpseudocode}%
\usepackage{listings}%
\usepackage{url}%

\usepackage{subcaption}

\newtheorem{theorem}{Theorem}
\newcommand{\Kbold}{\mathbf{K}}
\newcommand{\mubold}{\boldsymbol{\mu}}
\newcommand{\thetabold}{\boldsymbol{\theta}}

\newcommand{\Lbold}{\mathbf{L}}
\newcommand{\Bbold}{\mathbf{B}}

\begin{document}

\title[Ancestor Hawkes]{The Ancestor Hawkes Process with an Application to Group Chat Data}


\author*[1]{\fnm{Gordon} \sur{J. Ross}}\email{gordon.ross@ed.ac.uk}

\author[2]{\fnm{Isabella} \sur{Deutsch}}

\affil*[1]{\orgdiv{Department of Mathematics}, \orgname{University of Edinburgh}, \orgaddress{\street{UK}}}

\affil[2]{\orgname{Danu Insights Limited}, \orgaddress{\city{Edinburgh}, \country{UK}}}


\abstract{The Hawkes process is used to model point process data where events occur in clusters and bursts. In a standard multivariate Hawkes process, every event that occurs in a dimension has an equal impact on the process intensity. However, this assumption is unrealistic in applications such as the modelling of message cascades where the effect of an event depends on whether it was the initiator or a member of a particular cluster. To alleviate this, we introduce a new Hawkes process model, the Ancestor Hawkes process, which allows the impact of each event to vary based on its origin. The relevance of the Ancestor Hawkes process is showcased on real data from a 9-person group chat, where our proposed approach reveals individual response preferences. Crucially, this is achieved in a privacy-conscious manner, as only the sender and the time at which a message was sent -- but not its content -- are utilised. These nuances of messaging cascades are missed by the standard Hawkes process, but are relevant for studying latent interaction structure and for personalised notification management.}

\keywords{Point processes, Hawkes process, Bayesian inference, branching processes, social interaction data}



\maketitle

\section{Introduction} 

The Hawkes process \citep{hawkes_spectra_1971} is commonly used to model point process data when events can occur in clusters and bursts. It has enjoyed application in many fields including seismology \citep{ogata_statistical_1988}, retail analytics \citep{Pitkin2024}, and finance \citep{bacry_hawkes_2015}. Given a multivariate setting with $M$ dimensions, suppose that $N$ events are observed on the time interval $[0,T]$. We write these as  $Y = \{y_1,\ldots,y_N\}$ where $y_i = (t_i,d_i)$ and $t_i$ denotes the time of the $i^{th}$ event, with $d_i \in \{1,\ldots,M\}$ denoting  the dimension in which it occurred. The conditional intensity of the multivariate Hawkes process in each dimension $m$ is then given by:

\begin{equation}
       \lambda_m(t \mid \mathcal{H}_t )  = \mu_m(t)+ \sum_{i: t_i < t} K_{d_i \to m} \, g_{d_i \to m} (t-t_i),
       \label{eqn:intro}
\end{equation}
where  $\mathcal{H}_t$ contains all events that occurred before  time $t$. This intensity function has three components: 
\begin{enumerate}
    \item the background rates $\mu_m(\cdot)$ (which can differ across dimensions),
    \item the influence matrix $\mathbf{K} = \{K_{d \to m}\}_{d = 1 \dots M}^{m = 1 \dots M}$ where $K_{d \to m}$ denotes the size of the impact that an event occurring in dimension $d$ has on the intensity function in dimension $m$, and
    \item the influence kernels $g_{d \to m}(\cdot)$ which determine how this impact decays over time.
\end{enumerate}

An important property of the Hawkes process is that it can be interpreted as a branching process \citep{hawkes_cluster_1974} in which the total intensity in each dimension decomposes into an immigrant process with rate $\mu_m(t)$ and offspring processes triggered by past events via $K_{d_i \to m} g_{d_i \to m}(t-t_i)$. This naturally separates events into those which were exogenously generated by the background process (commonly referred to as immigrant events) and those which were triggered by previous events (referred to as triggered events).

The classic Hawkes process assumes that the influence of each event depends only on the dimension in which it occurred, irrespective of the branching structure. However, in some applications, it might be reasonable for immigrant and triggered events to have different influence magnitudes. Therefore, our goal is to motivate and present an extension of the Hawkes process where the influence of each event changes based on whether it is an immigrant or triggered. To that end, we propose the Ancestor Hawkes process\footnote{The term ``Ancestor'' refers here to the branching ancestry of events in a Hawkes cluster, and is unrelated to ancestor sampling in sequential Monte Carlo methods.} which allows immigrant and triggered events to have different influences. Instead of one matrix $\Kbold$ of influence magnitudes for all events as in the classic Hawkes, we use two matrices. Now $\Kbold$ governs the influence magnitude which immigrant events have on all dimensions and $\Lbold$ describes the influence magnitude that triggered events have.


The Ancestor Hawkes process is an extension of models currently utilised in the literature. The closest precursor is \citet{li_brunc_2020}, who estimate a branching structure for Hawkes processes where the branching structure is not available (e.g.\ inhibition, non-linear link function). They then use an adapted Chinese Restaurant Process to allow different triggerings for each cluster of events.  Also somewhat related is \citet{Kolev2019}, who need to introduce a distinction between immigrant and triggered events for the purpose of efficiently estimating a non-Poissonian variant of the Hawkes process. However, their concern is quite different to ours and they do not propose that influence dynamics should differ based on the branching structure. Another important precursor is the ARMA point process introduced by \cite{schatz2022arma} who introduce a self-exciting point process which allows for the first generation of triggered events to come from a different distribution to future generations, although their analysis is restricted to the univariate case. Unlike these approaches, we let the latent immigrant/triggered state itself modulate cross-dimension influence via distinct  $\Kbold$ and $\Lbold$ matrices.

\subsection{Motivation}
To motivate our model, we consider the following extended example of modelling group chat data, which we  further study in Section \ref{sec_groupchat_app} using a real data set.  A group chat is an asynchronous online conversation between more than two participants, commonly conducted over apps such as WhatsApp and Facebook Messenger. We focus on private group chats where only admitted participants can see the conversation and every participant has an equal opportunity to send messages \citep{mannell_plural_2020}. While we use group chats as a running example, our methodological goal is more general: to separate exogenous starts from reply-driven influence in multivariate event streams.

The literature on group chat modelling is sparse. For one of the few examples see \citet{guo_who_2019}, who use a long short-term memory neural network. Classic Hawkes processes have been used to model one-to-one conversations such as emails \citep{miscouridou_modelling_2018}, as well as one-to-many scenarios, like tweets on Twitter/X \citep{rizoiu_tutorial_2017}. However, to our knowledge, they have not been used for small-scale group chats, where all participants contribute actively.

We propose the Ancestor Hawkes model for group chats, as it can capture dynamics in a group chat beyond the classic Hawkes process. The set-up is as follows. Each of the chat participants is represented by a dimension. When they send a message, an event is recorded in that dimension. As a toy example, consider three people A, B, and C  participating in a group chat. We now examine the reply propensity of Person C in two different scenarios with the following messages being sent. 
\begin{enumerate}
    \item Person A: \textit{``Who would like to go out to a restaurant tonight?''} (immigrant event)
    \item Person B: \textit{``Who would like to go out to a restaurant tonight?''} (immigrant event) \\
    Person A: \textit{``I would love to come!''} (triggered event)
\end{enumerate}

In the classic Hawkes model, the message from A in both the first and the second scenario increases the intensity of Person C according to $K_{A \to C}$. This implies that Person C has the same rate of replies to the messages \textit{``Who would like to go out to a restaurant tonight?''} and \textit{``I would love to come!''}, as both messages come from Person A.  However it is clear that the context of A's message is very different in these two scenarios. In the first scenario, A is sending an invite so it is very likely to trigger a response from C. However in the second scenario, it is B sending the invite and A is merely responding to B. Here, C is less likely to reply to A (and more likely to reply to B). The classic Hawkes process has no mechanism to take this context difference into account since it treats all messages sent by A -- and their impact on C -- as identical.

Our Ancestor Hawkes model alleviates this problem. It allows for different reply propensities for messages that start a new messaging cascade (immigrant events) and for those that are a contribution/reply to an ongoing conversation (triggered event). Here, $\Kbold$ governs the influence of immigrant events and $\Lbold$ describes the influence of triggered events. Hence, \textit{``Who would like to go out to a restaurant tonight?''} increases the intensity of Person C by $K_{A \to C}$, since this is an immigrant message. In contrast, the message \textit{``I would love to come!''} increases the intensity of Person C according to $L_{A \to C}$, as it is a triggered message. This allows the rate of reply for $A\to C$ to differ between immigrant ($K_{A \to C}$) and triggered events ($L_{A \to C}$). We show below in Section~\ref{sec_groupchat_app} that through this the Ancestor Hawkes model can capture dynamics in the group chat data that the classic Hawkes is not able to express.


Crucially,  the Ancestor Hawkes model uncovers messaging dynamics that are too difficult to express in the classic Hawkes process model. In a group chat setting, it reveals the individual response preferences of each participant. On one hand, this is interesting from a sociological point of view as it allows for the analysis of hidden community structures similar to \citet{miscouridou_modelling_2018}, who demonstrate this for one-to-one email conversations. On the other hand, our approach could be employed by app developers to infer and summarise reply propensities in a privacy-conscious manner: only the time and sender of messages — not their content — are used. The same distinction between spontaneous starts and replies also arises in other multivariate streams (e.g., contact-centre queues, organisational email without reliable reply metadata, and re-share cascades), where separating exogenous load from reply-driven coupling is likewise useful.

The remainder of the paper is organised as follows. Section~\ref{sec_models_anc} defines the Ancestor Hawkes process and gives stability conditions. Section~\ref{sec_estimation_anc} develops a Bayesian estimation procedure based on a latent-variable Gibbs sampler. Section~\ref{sec_simulated_data} uses simulated data to compare the proposed model with the classic Hawkes process. Section~\ref{sec_groupchat_app} applies the Ancestor Hawkes model to real group chat data, while Section~\ref{sec:timevarying} presents a further robustness study based on data simulated from the fitted group chat model. We conclude with a discussion.

\section{Models} \label{sec_models_anc}

\subsection{Point Process Background}

We work with an $M$-variate point process, represented as a time-ordered sequence of events
$\{(t_i,d_i)\}_{i=1}^N$ on $[0,T]$, where $0<t_1<\cdots<t_N\le T$ are event times and
$d_i\in\{1,\dots,M\}$ indicates the dimension of event $i$.
For each dimension $m$, define the counting process
\[
N_m(t) \;:=\; \sum_{i=1}^N \mathbf{1}\{t_i\le t,\ d_i=m\}, \qquad t\in[0,T],
\]
and collect these into the vector $N(t)=(N_1(t),\dots,N_M(t))$.
Let $\mathcal H_t$ denote the observable history up to time $t$, i.e., all event times and their dimensions
$\{(t_i,d_i): t_i<t\}$. A (conditional) intensity for dimension $m$ is a nonnegative process
$\lambda_m(t)$ such that for small $\Delta t>0$,
\[
\Pr\{N_m(t+\Delta t)-N_m(t)=1 \mid \mathcal H_t\}
= \lambda_m(t)\,\Delta t + o(\Delta t),\qquad
\Pr\{N_m(t+\Delta t)-N_m(t)\ge 2 \mid \mathcal H_t\}=o(\Delta t).
\]
In the multivariate case we also assume that simultaneous jumps across distinct dimensions occur
with negligible probability:
\[
\Pr\{N_m(t+\Delta t)-N_m(t)=1,\; N_{m'}(t+\Delta t)-N_{m'}(t)=1 \mid \mathcal H_t\}
= o(\Delta t)\quad \text{for } m\neq m'.
\]

\subsection{Classic Hawkes Process}

The Hawkes process is used to capture clustering/self-exciting behavior in point processes. The standard multivariate Hawkes process has the following intensity function, given in Equation \ref{eqn:intro} and restated here for convenience:
$$
 \lambda_m(t)  = \mu_m(t)+ \sum_{i: t_i < t} K_{d_i \to m} \, g_{d_i \to m} (t-t_i),   
$$
where we have suppressed the dependence on $\mathcal{H}_t$ for notational convenience. The three components of this intensity are:

\begin{enumerate}
    \item The background rates $\mu_m(t) >0$ determine the rate at which immigrant events are produced. Often they will be a function of time in order to capture underlying seasonality or trends that are not driven by self-exciting behavior. The choice of background rate is flexible and often application-specific. For example, \citet{mohler2013modeling} uses a Log-Gaussian Cox process, \citet{molkenthin_gp-etas_2022} use a Gaussian Process to represent the background rate, and \citet{markwick_bayesian_2020} employs a Dirichlet process.
    \item The influence magnitudes $K_{d_i \to m}\geq 0$  control the influence that an event in dimension $d_i$ has on dimension $m$. Since these magnitudes are required to be non-negative, the Hawkes process is referred to as `self-exciting' -- the occurrence of an event causes the process magnitude to increase, which makes it more likely that further events will happen soon after. This naturally leads to events occurring in clusters/bursts.
    \item The influence kernels  $g_{d_i \to m}(\cdot)$  determine how the influence is `spread out' over time. Typically these are required to be monotonic decreasing, so that the impact of each event on the process decays over time. To aid interpretation, we require that $\int_0^{\infty} g(z) dz =  1$ for each $d \to m$ kernel, in which case the $K_{d_i \to m}\geq 0$  parameters can be interpreted as the average number of additional events in dimension $m$ that are triggered by an event in dimension $d_i$. A common choice for the kernels is the Exponential distribution $g(z) = \beta e^{-\beta z}$ (where the $\beta$'s might vary for each $d \to m$ pair), although other kernels, such as heavy-tailed power laws, are commonly used in seismology \citep{ogata_statistical_1988,kolev_semiparametric_2020}.
\end{enumerate}



\subsection{Branching Process Interpretation}

A linear multivariate Hawkes process admits a cluster representation \citep{hawkes_cluster_1974}. Here, immigrant events arrive in each dimension according to the background rate $\mu_m(t)$ and start new clusters of triggered events. Each event (immigrant or triggered) generates direct offspring in any dimension via inhomogeneous Poisson processes with rates given by the influence parameters and temporal kernels (as in the Hawkes intensity). Iterating this yields generations of events; the observed data are then the union of all cluster trees.

Based on this, it is common  \citep{rasmussen_bayesian_2013,ross_bayesian_2021} to introduce latent variables which show the direct parent of each event $i$:

\[
B_i \;=\;
\begin{cases}
0, & \text{if event $i$ is an immigrant (generated by the background rate)}\\
k, & \text{if event $i$ was directly triggered by event $k$}
\end{cases}
\]
We  collect these as $\mathbf{B}=\{B_1,\dots,B_N\}$.  These variables are latent since we do not directly observe whether events are immigrants or triggered. We also define
\begin{align}
    S_{i,m} = & \{j: B_j = i \text{ and } d_j = m\}  \text{ for } i = 0 \dots N, m = 1 \dots M ,
\end{align}

so that $S_{i,m}$ contains the indices of \emph{direct children} of event $i$ that occur in dimension $m$. 
The special case $S_{0,m}$ collects the \emph{immigrant} events in dimension $m$. 

In the classic multivariate Hawkes model above, each event $i$ in dimension $d_i$ produces potential offspring on target dimension $m$ with rate $K_{d_i \to m}\, g_{d_i \to m}(t-t_i)$; our Ancestor variants will extend this by allowing these rates to vary depending on whether event $i$  is an immigrant or triggered.

\subsection{Ancestor Hawkes}

We now present our new Ancestor Hawkes model. Here, immigrant and triggered events differ in their influences. Hence, the branching structure is directly incorporated into the parameter architecture. In this case, the influence of an event $y_i$ onto dimension $m$ does not only depend on the dimension of $y_i$ but additionally on its (latent) branching variable $B_i$. When $B_i = 0$  and hence $y_i$ is an immigrant event, then it influences other dimensions according to the influence magnitude matrix $\Kbold$.  Alternatively, if $y_i$ is a triggered event and therefore $B_i > 0$ then the influence is governed by a different magnitude matrix $\Lbold$. For the Ancestor Hawkes model, we propose the intensity function
\begin{equation}
    \lambda_m(t) = \mu_m(t) + \underbrace{\sum_{\substack{i: t_i < t, \\B_i = 0}} K_{d_i \to m} \, g_{d_i \to m} (t-t_i)}_{\text{contributions from immigrant events}} + \underbrace{\sum_{\substack{i: t_i < t, \\B_i > 0}} L_{d_i \to m} \, h_{d_i \to m} (t-t_i)}_{\text{contributions from triggered events}}  \label{eq_intensity_ancestor} ,
\end{equation}
 This leads to parameters $\thetabold = (\mubold, \Kbold, \Lbold, \thetabold_g, \thetabold_h)$, where both $\Kbold$ and $\Lbold$ are matrices of dimension $M \times M$. Hence, there are $2M^2$ influence magnitude parameters in this model. In addition,  $\mubold$ denotes the parameters of the background rate,  $\thetabold_g$ parameterises  $g_{m_1 \to m_2}(\cdot)$  (the influence kernel for immigrant events), and $\thetabold_h$ contains the parameters for  $h_{m_1 \to m_2}(\cdot)$ (the influence kernel for triggered events). When referring to the parameters that govern the influence from dimension $m_1$ onto $m_2$ we write $\theta_{h \mid m_1 \to m_2}$. As above, we will always choose influence kernels $g$ and $h$ that integrate to 1, so that the entries of $\Kbold$ and $\Lbold$ can be interpreted as expected direct offspring counts. 

The intensity in Equation \ref{eq_intensity_ancestor} is conditional on the enlarged history containing both
the observed events and their latent branching variables. In real-data applications
these branching variables are unobserved, so inference is carried out by sampling them
jointly with the model parameters. While simulation from the process is straightforward
using the cluster representation above, posterior inference is more complicated; in
Section~\ref{sec_estimation_anc} we derive a Gibbs sampler which samples the latent parent variables and
draws from the posterior distribution of the model parameters.

\paragraph{Restricted variant.}

One possible restricted version of the Ancestor Hawkes model is obtained by setting
\(L_{j\to m}=0\) for \(j\neq m\), so that triggered events may only generate further
events in their own dimension. This prevents triggered cascades from propagating
across dimensions and may be useful when a more conservative interpretation of
cross-dimensional influence is desired. We focus here on the unrestricted Ancestor
Hawkes model.




\subsection{Stability of the Process}
We now discuss stability and stationarity for the Ancestor Hawkes process. Informally, a Hawkes process is \emph{stable} (subcritical) if the branching process associated with each immigrant event has finite expected total size. Under standard regularity conditions on the kernels and the background rate, this rules out explosive cascades in finite time. If the process parameters are constant, then this also guarantees the existence of a \emph{stationary} version of the process.

\textbf{Univariate intuition}: for a univariate Hawkes with $\lambda(t)=\mu(t)+K\sum_{i:t_i<t} g(t-t_i)$ and $\int_0^\infty g(u)\,du=1$, each event produces on average $K$ direct offspring. If $K<1$ then the effect of any event dies out in expectation; if $K\ge 1$ the associated branching process is not subcritical and a finite-mean stationary version does not exist. As such, stability requires $K < 1$.

\textbf{Multivariate standard Hawkes}: for the $M$-variate case (Equation~\ref{eqn:intro}), stability holds when the spectral radius of the reproduction matrix is less than one, $\rho(\mathbf{K})<1$; with constant $\mu$ this yields a unique stationary solution \citep[see, e.g.,][]{deutsch2025cannibalisation,bacry_hawkes_2015}.

\textbf{Ancestor Hawkes}: separate (i) immigrant$\to$first-generation influence via $\mathbf{K}$ from (ii) triggered$\to$triggered influence via $\mathbf{L}$. Long-run feedback arises only through the triggered pathway, so stability is governed by $\mathbf{L}$:

\begin{theorem}[Mean behaviour and subcriticality]
Let $g_{j\to d},h_{j\to d}\ge 0$ with $\int_0^\infty g_{j\to d}(u)\,du=\int_0^\infty h_{j\to d}(u)\,du=1$ for all ordered pairs $(j,d)$. Let $\mathbf{K},\mathbf{L}\in\mathbb{R}^{M\times M}_{+}$ denote the immigrant$\to$triggered and triggered$\to$triggered influence matrices, and suppose the background rates $\mu_d(t)$ are locally integrable. Let $r_d(t)$ denote the mean intensity of triggered events in dimension $d$, so that the mean total intensity is
\[
m_d(t):=\mathbb{E}[\lambda_d(t)] = \mu_d(t)+r_d(t).
\]
If $\rho(\mathbf{L})<1$, the triggered branching process is subcritical and the mean triggered intensities satisfy, for each $d=1,\dots,M$,
\[
r_d(t)
=
\sum_{j=1}^M K_{j\to d}\!\int_{0}^{\infty} g_{j\to d}(u)\,\mu_j(t-u)\,du
+
\sum_{j=1}^M L_{j\to d}\!\int_{0}^{\infty} h_{j\to d}(u)\,r_j(t-u)\,du.
\]
Consequently, the mean total intensities are $m_d(t)=\mu_d(t)+r_d(t)$.
\end{theorem}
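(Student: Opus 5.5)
We will take expectations in the intensity of Equation~\ref{eq_intensity_ancestor} and organise the triggered events by generation, using the cluster representation. Write $\Lambda^{(0)}_j(t)=\mu_j(t)$ for the immigrant intensity in dimension $j$. First-generation triggered events (children of immigrants) in dimension $d$ then form, conditionally on the immigrants, a Poisson process with rate $\sum_j\sum_{i:B_i=0,d_i=j}K_{j\to d}g_{j\to d}(t-t_i)$. By Campbell's formula its mean intensity is
\[
r^{(1)}_d(t)=\sum_j K_{j\to d}\int_0^\infty g_{j\to d}(u)\mu_j(t-u)\,du .
\]
For $n\ge 1$, generation-$(n+1)$ events are children of generation-$n$ events, which are triggered and so use $\mathbf{L},h$. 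Conditioning on generation $n$ and applying Campbell's formula again gives
\[
r^{(n+1)}_d(t)=\sum_j L_{j\to d}\int_0^\infty h_{j\to d}(u)\,r^{(n)}_j(t-u)\,du .
\]
Every quantity here is a nonnegative function, so exchanging expectation, sums and integrals is justified by Tonelli's theorem. We set $\mu_j(s)=0$ for $s<0$, since the process starts at time $0$; this makes all integrals finite by local integrability.

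Next we define $r_d=\sum_{n\ge1}r^{(n)}_d$, which by monotone convergence is the mean intensity of all triggered events in dimension $d$. Summing the recursion over $n$ gives $r=r^{(1)}+\mathcal{L}r$, where $\mathcal{L}$ is the positive linear operator $(\mathcal{L}f)_d(t)=\sum_j L_{j\to d}\int h_{j\to d}(u)f_j(t-u)\,du$. This is exactly the claimed identity, provided $r$ is finite.

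The main work is showing that $r$ is finite, i.e.\ that the series $\sum_n \mathcal{L}^{n-1}r^{(1)}$ converges, and this is where $\rho(\mathbf{L})<1$ enters. Each $h_{j\to d}$ integrates to $1$, so integrating over a window gives a matrix bound on cumulative means. With $R^{(n)}_d(T)=\int_0^T r^{(n)}_d(t)\,dt$, Tonelli yields $R^{(n+1)}(T)\le \mathbf{L}^{\top}R^{(n)}(T)$ componentwise. Since $\rho(\mathbf{L}^\top)=\rho(\mathbf{L})<1$, the Neumann series $\sum_n(\mathbf{L}^\top)^n$ converges, so $\sum_n R^{(n)}(T)\le (I-\mathbf{L}^\top)^{-1}R^{(1)}(T)<\infty$. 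Hence $r$ is locally integrable and finite almost everywhere. The same bound shows that the expected total number of triggered descendants of each first-generation event is finite, which is the subcriticality statement.

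If pointwise (rather than a.e.) finiteness is wanted, we will additionally assume bounded $\mu$ on compacts and bound $\sup_{s\le t}r^{(n)}(s)$ by the same matrix inequality. Finally, $m_d(t)=\mathbb{E}[\lambda_d(t)]$ is obtained by taking expectations term by term in Equation~\ref{eq_intensity_ancestor}: the immigrant sum and the triggered sum have expectations that together make up $r_d$ by the Campbell computation above, so $m_d=\mu_d+r_d$. We expect the main obstacle to be the careful justification of the Campbell/conditioning step across generations with latent labels. This is handled cleanly by simulating the process via the cluster representation, in which the labels are part of the construction.
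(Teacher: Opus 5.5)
Your proposal is correct and follows the same route as the paper's sketch. Immigrants seed the first generation through $\mathbf{K},g$ and every later generation reproduces through $\mathbf{L},h$, so the recursion closes on $r$ rather than on $m$; you add the rigour the sketch omits, namely the generation-wise Campbell/Tonelli step, the monotone summation to $r=r^{(1)}+\mathcal{L}r$, and the cumulative bound $\sum_n R^{(n)}(T)\le (I-\mathbf{L}^{\top})^{-1}R^{(1)}(T)$, which is exactly where $\rho(\mathbf{L})<1$ enters.
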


\begin{proof}[Proof sketch]
Immigrant events on source dimension $j$ arrive with rate $\mu_j(t)$ and seed a first generation process on $d$ with mean contribution
\[
K_{j\to d}\int_0^\infty g_{j\to d}(u)\,\mu_j(t-u)\,du.
\]
Triggered events then reproduce linearly via $\mathbf{L}$ and kernels $h_{j\to d}$, contributing
\[
L_{j\to d}\int_0^\infty h_{j\to d}(u)\,r_j(t-u)\,du.
\]
The crucial point is that the recursive term involves the mean intensity of triggered events, $r_j(t)$, rather than the total mean intensity $m_j(t)$, since immigrant events reproduce through $\mathbf{K}$ rather than through $\mathbf{L}$. Since $\rho(\mathbf{L})<1$, the triggered subtree is subcritical, which yields the stated renewal system for $r(t)$ and hence $m(t)$.
\end{proof}

A corollary of this result is that if $\mu_j(t)\equiv \mu_j$ for all $j$, then $r(t)$ and $m(t)=\bar\lambda$ are constant. Using the column-vector convention in which $\mathbf{K}_{d j}=K_{j\to d}$ and $\mathbf{L}_{d j}=L_{j\to d}$, we obtain
\[
r = (I-\mathbf{L})^{-1}\mathbf{K}\,\mu,
\qquad
\bar\lambda \;=\; \mu \;+\; (I-\mathbf{L})^{-1}\mathbf{K}\,\mu,
\]
or equivalently
\[
\bar\lambda_d
=
\mu_d
+
\sum_{j=1}^M
\big[(I-\mathbf{L})^{-1}\mathbf{K}\big]_{d j}\,\mu_j.
\]
This recovers the standard result for multivariate Hawkes processes (e.g., \citealp{hawkes_cluster_1974,achab2017}) with the first-generation triggered events (via $\mathbf{K}$) split from ongoing reproduction (via $\mathbf{L}$). 

\section{Estimation} \label{sec_estimation_anc}

We now develop a Bayesian Gibbs sampler for the Ancestor Hawkes process. Our approach extends the latent-variable sampler of \citet{ross_bayesian_2021} from the univariate classic Hawkes setting, which we first review. The main additional complication, beyond the multivariate extension, is that the process intensity in Equation \ref{eq_intensity_ancestor} depends on the latent branching variables themselves. As a result, the conditional independence structure used in the classic Hawkes sampler no longer holds, and the branching variables must be updated differently.

We assume that priors $\pi(\cdot)$ have been chosen for each parameter of the model (more details on the priors will be given later). The iterative Gibbs procedure uses a starting value $\thetabold^{(0)}$ and at step $k$ of the procedure produces a sample   $\thetabold^{(k)}$ from the posterior distribution.  Each block is updated from its full conditional; we normalize all categorical weights where relevant.

\subsection{Classic Hawkes with Branching Structure}

We begin with the classic multivariate Hawkes process. The sampler is a straightforward multivariate extension of \citet{ross_bayesian_2021}, although, to our knowledge, this extension has not previously been presented in a unified form. Here, the model parameter vector is $\thetabold = (\mubold, \Kbold, \thetabold_g)$ which we augment with the latent branching variables $\Bbold$, 

\subsubsection{Conditional Likelihood}

Using an argument similar to \citet{ross_bayesian_2021}, the likelihood of a classic multivariate Hawkes process conditional on the latent branching structure $\Bbold$ can be written as follows.

\begin{align}
p(Y\mid\mathbf B,\boldsymbol\mu(\cdot),\mathbf K,\theta_g)
=&
\prod_{m=1}^M
\left\{
\left[
\prod_{i\in S_{0,m}} \mu_m(t_i)
\right]
\exp\!\left(-\int_0^T \mu_m(t)\,dt\right)
\right\}
\nonumber\\
&\times
\prod_{\ell=1}^M
\prod_{\substack{p:\ d_p=\ell}}
\prod_{m=1}^M
\left\{
\exp\!\big(-K_{\ell\to m}G_{\ell\to m}(T-t_p)\big)
\prod_{i\in S_{p,m}}
K_{\ell\to m}g_{\ell\to m}(t_i-t_p)
\right\}.
\end{align}

where  $G_{\ell \to m}(z) = \int_0^z g_{\ell \to m}(x) \, dx$.

This follows from the cluster process representation where we write the total intensity at time $t$ as a superposition of independent Poisson processes, corresponding to the background process, and a triggered process for each of the previous events that occurred before $t$ (independence holds conditional on $\Bbold$). The first term above is the likelihood contribution from the background process, and the other terms represent the contributions from the triggered processes.

As argued by \citet{ross_bayesian_2021} the key reason for conditioning on the latent background variables is that it drastically reduces the correlation between the model parameters. Specifically,  conditional on the branching variables, the parameters for the background rate $\mubold$ are independent of $(\Kbold, \thetabold_g)$, which describe the influences. Additionally, in our multivariate context, $(K_{j \to m}, \theta_{g\mid j \to m})$ are conditionally independent of  $(K_{i \to n}, \theta_{g \mid i \to n})$ when $i \neq j$ or $n \neq m$. These conditional independences are exploited in the subsequent sampling procedure. While our estimation scheme uses a Bayesian approach, we note in passing that these conditional independences would also facilitate faster convergence for maximum likelihood estimation in an Expectation-Maximization approach, similar to that developed by \citet{Veen_2008_estimation} for a univariate Hawkes process.

We now present our Gibbs sampling scheme. As usual, each of the model parameters is sampled in turn, from its full conditional posterior.

\subsubsection{Sample $\Bbold^{(k+1)}$}

First, the latent branching structure is sampled, assuming that there is equal prior probability for $B_i = 0 \dots i-1$, by \begin{equation} p(B_i^{(k+1)} = j \mid \mubold^{(k)}, \Kbold^{(k)}, \thetabold_g^{(k)},Y) \propto \begin{cases} \mu_{d_i}^{(k)}(t_i) ,& \text{if } j = 0 \\ K_{d_j \to d_i}^{(k)} \, g^{(k)}_{d_j \to d_i}(t_i - t_j) , & \text{if } j \in \{1, 2 \dots i-1\} \end{cases}. \end{equation} Each $B_i^{(k+1)}$ for $i = 1 \dots N$ is sampled independently. Based on $\Bbold^{(k+1)}$, the construction of $\mathbf{S}^{(k+1)}$ follows.

\subsubsection{Sample $\mubold^{(k+1)}$}

First assume that $\mu_m$ is constant. The conditional posterior density of $\mu_m$ is
\begin{equation}
    \pi(\mu_m \mid  \Bbold, Y) \propto \pi(\mu_m) \times \mu_m ^{\mid S_{0,m}\mid } \, \text{exp}(- \mu_m \, T).
\end{equation}

When the prior $\pi(\mu_m)$ is chosen to be a Gamma distribution with parameters $(a_\mu, b_\mu)$, then the posterior distribution is conjugate with
\begin{equation}
   \pi(\mu_m \mid  \Bbold,Y ) = \text{Gamma}(a_\mu + |S_{0,m}| , b_\mu + T).
\end{equation}
This means each $\mu_m^{(k+1)}$ can be sampled directly from $\text{Gamma}(a_\mu + |S_{0,m}^{(k+1)}| , b_\mu + T)$ for $m = 1 \dots M$. If the prior is not conjugate, then any Metropolis-Hastings scheme can be used to draw a sample instead.

When $\mu_m(t)$ is time-varying, note that conditional on \(\mathbf B\) the immigrant times \(\{t_i: B_i=0,\ d_i=m\}\) are draws from an inhomogeneous Poisson process with rate \(\mu_m(t)\). We can model \(\mu_m(t)\) as (e.g.) piecewise-constant on a partition \(\{I_b\}\) of \([0,T]\): \(\mu_m(t)=\mu_{m,b}\) for \(t\in I_b\) with bin width \(\Delta_b=|I_b|\). Place independent Gamma priors
\[
\mu_{m,b} \sim \mathrm{Gamma}(a_{\mu,b},\, b_{\mu,b}) \quad \text{(shape–rate)}.
\]
Let \(n_{m,b} := |\{i: B_i=0,\ d_i=m,\ t_i\in I_b\}|\). Then the full conditional is conjugate:
\[
\mu_{m,b} \mid \mathbf B \ \sim\ \mathrm{Gamma}\!\big(a_{\mu,b}+n_{m,b},\ b_{\mu,b}+\Delta_b\big).
\]
This is the standard inhomogeneous Poisson update with counts \(n_{m,b}\) and exposure \(\Delta_b\).

\subsubsection{Sample $\Kbold^{(k+1)}$ }

For each ordered pair $(m_1,m_2)$, define the sufficient statistics
\[
C_{m_1\to m_2} \;:=\; \sum_{\substack{p:\ d_p=m_1}} |S_{p,m_2}|,
\qquad
E_{m_1\to m_2} \;:=\; \sum_{\substack{p:\ d_p=m_1}} G_{m_1\to m_2}(T-t_p),
\]
where $G_{m_1\to m_2}(z) = \int_0^z g_{m_1\to m_2}(x)\,dx$.

Assume a Gamma prior on each entry with **shape–rate** parameterisation
\[
K_{m_1\to m_2}\ \sim\ \mathrm{Gamma}\!\big(a^{(0)}_{m_1\to m_2},\, b^{(0)}_{m_1\to m_2}\big),
\quad\text{density } f(x)\propto x^{a-1}e^{-bx}.
\]
Then the full conditional is
\[
K_{m_1\to m_2}\mid \Bbold,\thetabold_g, Y \ \sim\
\mathrm{Gamma}\!\Big(
a^{(0)}_{m_1\to m_2} + C_{m_1\to m_2}\ ,\
b^{(0)}_{m_1\to m_2} + E_{m_1\to m_2}
\Big),
\]
and entries update independently across $(m_1,m_2)$.

\subsubsection{Sample $\thetabold_g^{(k+1)}$ }

For each ordered pair $(m_1,m_2)$, the full conditional (up to normalization) is
\begin{align}
\pi\!\left(\theta_{g\mid m_1\to m_2}\mid \Bbold,\Kbold,Y \right)
&\propto \pi\!\left(\theta_{g\mid m_1\to m_2}\right) \nonumber\\[2pt]
&\quad\times \prod_{\substack{p:\ d_p=m_1}}
\exp\!\Big(-K_{m_1\to m_2}\,G_{m_1\to m_2}(T-t_p;\,\theta_{g\mid m_1\to m_2})\Big) \nonumber\\[2pt]
&\quad\times \prod_{\substack{p:\ d_p=m_1}}\ \prod_{i\in S_{p,m_2}}
g_{m_1\to m_2}\!\big(t_i-t_p;\,\theta_{g\mid m_1\to m_2}\big). \label{eq:thetarg}
\end{align}
We update each $\theta_{g\mid m_1\to m_2}$ independently using a short Metropolis--Hastings
or slice sampling step.

\subsection{Ancestor Hawkes with Branching Structure}

We now turn to the Ancestor Hawkes model, with parameter vector 
$\thetabold = (\mubold, \Kbold, \Lbold, \thetabold_g, \thetabold_h)$,
where $\thetabold_g$ parameterises the immigrant-event kernels 
$g_{m_1 \to m_2}(\cdot)$ and $\thetabold_h$ parameterises the triggered-event kernels 
$h_{m_1 \to m_2}(\cdot)$. Here, $\Kbold$ governs the influence of immigrant events, 
while $\Lbold$ governs the influence of triggered events. As before, we augment the 
model with the latent branching variables $\Bbold$. By the same superposition argument 
used above, the likelihood of $Y$ conditional on $\Bbold$ can be written as:
\begin{align}
p(Y \mid \thetabold, \Bbold)
=&
\prod_{m = 1}^M
\left[
\prod_{i \in S_{0,m}} \mu_m(t_i)
\right]
\exp\left\{-\int_0^T \mu_m(t)\,dt\right\}
\nonumber\\
&\times
\prod_{m = 1}^M
\prod_{j = 1}^M
\prod_{\substack{p: d_p = j, \\ B_p = 0}}
\left[
\exp\{-K_{j \to m} G_{j \to m}(T - t_p)\}
\prod_{i \in S_{p,m}}
K_{j \to m} g_{j \to m}(t_i - t_p)
\right]
\nonumber\\
&\times
\prod_{m = 1}^M
\prod_{l = 1}^M
\prod_{\substack{q: d_q = l, \\ B_q > 0}}
\left[
\exp\{-L_{l \to m} H_{l \to m}(T - t_q)\}
\prod_{r \in S_{q,m}}
L_{l \to m} h_{l \to m}(t_r - t_q)
\right].
\end{align}
Here we write 
$G_{m_1 \to m_2}(z) = \int_0^z g_{m_1 \to m_2}(x) \, dx$ and 
$H_{m_1 \to m_2}(z) = \int_0^z h_{m_1 \to m_2}(x) \, dx$.

The Gibbs sampler is similar to the classic Hawkes case, but with two important modifications.
First, $(\Kbold, \thetabold_g)$ is updated using only offspring of immigrant events, while
$(\Lbold, \thetabold_h)$ is updated using only offspring of triggered events. Second, the
branching variables are no longer conditionally independent. In the classic Hawkes sampler,
each $B_i$ can be updated independently of the other $B_j$ variables, for $j \ne i$.
In the Ancestor model this is no longer true, because assigning an event to the immigrant
or triggered class changes the way that event contributes to future intensities. The branching
variables therefore become mutually dependent and must be sampled accordingly. We now
present our full Gibbs sampling scheme; as above, this could be modified to allow for an
efficient EM-style maximum likelihood estimation in line with \citet{Veen_2008_estimation},
although we do not pursue this here.

\subsubsection{Sample $\Bbold^{(k+1)}$}

In the Ancestor model, the effect of an event on future intensities depends on whether that event is labelled as immigrant or triggered. If $B_j=0$ (immigrant) then its children are generated via $(\mathbf K,g)$; if $B_j>0$ (triggered) then its children are generated via $(\mathbf L,h)$. Thus each $B_j$ affects both the \emph{incoming} likelihood of event $j$ and the \emph{outgoing} likelihood to its direct children. We update the branching variables one at a time in a reverse-time Gibbs sweep \((j=N,N-1,\ldots,1)\). Conditional on the current values of all other branching variables, the direct child sets of event \(j\) are fixed, so the full conditional for \(B_j\) can be computed by combining the likelihood contribution for event \(j\) as a child with the likelihood contribution of event \(j\) as a parent.

\paragraph{Outgoing contribution for event $j$ (as a parent).}
Let $d_j$ denote the source dimension of event $j$, let $\tau_j:=T-t_j$ denote the remaining observation time after event $j$, and define the kernel primitives
$G_{d\to m}(z):=\int_0^z g_{d\to m}(u)\,du$ and $H_{d\to m}(z):=\int_0^z h_{d\to m}(u)\,du$.
Given the current direct children of $j$, the contribution to the likelihood from $j$ acting as a parent is
\[
L_{\text{out}}(j\mid B_j=0)
=\prod_{m=1}^M \exp\!\big(-K_{d_j\to m}\,G_{d_j\to m}(\tau_j)\big)\,
\prod_{i\in S_{j,m}} K_{d_j\to m}\, g_{d_j\to m}(t_i-t_j),
\]
\[
L_{\text{out}}(j\mid B_j>0)
=\prod_{m=1}^M \exp\!\big(-L_{d_j\to m}\,H_{d_j\to m}(\tau_j)\big)\,
\prod_{i\in S_{j,m}} L_{d_j\to m}\, h_{d_j\to m}(t_i-t_j).
\]
The exponential terms are the right–censored compensators from $t_j$ to $T$ for all potential targets $m$.

\paragraph{Incoming contribution for $j$ (as a child).}
For a candidate parent $k\in\{0,1,\ldots,j-1\}$ with $t_k<t_j$, the contribution of the edge $k\!\to\! j$ is
\[
L_{\text{in}}(j\mid B_j=k)=
\begin{cases}
\mu_{d_j}(t_j)\ \text{(time–varying)}\ \text{or}\ \mu_{d_j}\ \text{(constant)}, & k=0,\\[4pt]
K_{d_k\to d_j}\, g_{d_k\to d_j}(t_j-t_k), & k>0 \ \text{and}\ B_k=0,\\[4pt]
L_{d_k\to d_j}\, h_{d_k\to d_j}(t_j-t_k), & k>0 \ \text{and}\ B_k>0.
\end{cases}
\]

\paragraph{Full conditional and backward sweep.}
With a flat prior over admissible parents, the unnormalised weight for $k$ is
\[
w_j(k)\;=\;L_{\text{in}}(j\mid B_j=k)\times L_{\text{out}}(j\mid B_j=k),
\]
normalized over $k\in\{0,1,\ldots,j-1\}$ with $t_k<t_j$. We sample $B_j$ from these normalized weights. After drawing $B_j$, we update the child sets by removing $j$ from its old parent’s $S$ and adding it to the new parent’s $S$. The reverse-time sweep ensures that $S_{j,m}$ is known at the moment of updating $B_j$, so $L_{\text{out}}(j\mid\cdot)$ is computable without look-ahead.

\subsubsection{Sample $\mubold^{(k+1)}$}

The posterior density of $\mu_m$ is the same as in the classic Hawkes case:
\[
\pi(\mu_m \mid \mathbf B, Y)\ \propto\ \pi(\mu_m)\ \mu_m^{\,|S_{0,m}|}\, \exp(-\mu_m\, T).
\]
With a Gamma prior $\mu_m \sim \mathrm{Gamma}(a_\mu,b_\mu)$ (shape–rate),
\[
\mu_m \mid \mathbf B, Y \sim \mathrm{Gamma}\big(a_\mu + |S_{0,m}|,\ b_\mu + T\big),\qquad m=1,\dots,M.
\]

If $\mu(t)$ is time-varying then this can be handled using (e.g) the histogram estimator described above.

\subsubsection{Sample $\mathbf K^{(k+1)}$ and $\mathbf L^{(k+1)}$}
Given $\mathbf B$, the posteriors factorize across ordered pairs. Define the immigrant and triggered parent sets
\[
J_K := \{j:\ B_j=0\}, \qquad J_L := \{j:\ B_j>0\}.
\]
Then for each $(m_1,m_2)$,
\[
K_{m_1\to m_2}\mid \mathbf B,\theta_g, Y \ \sim\ 
\mathrm{Gamma}\!\Big(
a^{(0)}_{m_1\to m_2} + \!\!\sum_{\substack{j\in J_K\\ d_j=m_1}}\! |S_{j,m_2}|\ ,\
b^{(0)}_{m_1\to m_2} + \!\!\sum_{\substack{j\in J_K\\ d_j=m_1}}\! G_{m_1\to m_2}(T-t_j)
\Big),
\]
\[
L_{m_1\to m_2}\mid \mathbf B,\theta_h, Y \ \sim\ 
\mathrm{Gamma}\!\Big(
a^{(0)}_{m_1\to m_2} + \!\!\sum_{\substack{j\in J_L\\ d_j=m_1}}\! |S_{j,m_2}|\ ,\
b^{(0)}_{m_1\to m_2} + \!\!\sum_{\substack{j\in J_L\\ d_j=m_1}}\! H_{m_1\to m_2}(T-t_j)
\Big).
\]


\subsubsection{Sample $\boldsymbol\theta_g^{(k+1)}$ and $\boldsymbol\theta_h^{(k+1)}$}
The kernel-parameter updates are obtained in the same way as above, but with the parent set restricted according to event type. For $\thetabold_g$, we use only immigrant parents; for $\thetabold_h$, we use only triggered parents.

For $\theta_{g\mid m_1\to m_2}$ (restrict to immigrant parents $J_K$):
\begin{align*}
\pi\!\left(\theta_{g\mid m_1\to m_2}\mid \mathbf B,\mathbf K, Y\right)
&\propto \pi\!\left(\theta_{g\mid m_1\to m_2}\right)
\prod_{\substack{j\in J_K:\\ d_j=m_1}}
\exp\!\Big(-K_{m_1\to m_2}\,G_{m_1\to m_2}(T-t_j;\,\theta_{g\mid m_1\to m_2})\Big)\\
&\qquad\qquad\times \prod_{\substack{j\in J_K:\\ d_j=m_1}}\ \prod_{i\in S_{j,m_2}}
g_{m_1\to m_2}\!\big(t_i-t_j;\,\theta_{g\mid m_1\to m_2}\big).
\end{align*}

For $\theta_{h\mid m_1\to m_2}$ (restrict to triggered parents $J_L$):
\begin{align*}
\pi\!\left(\theta_{h\mid m_1\to m_2}\mid \mathbf B,\mathbf L, Y\right)
&\propto \pi\!\left(\theta_{h\mid m_1\to m_2}\right)
\prod_{\substack{j\in J_L:\\ d_j=m_1}}
\exp\!\Big(-L_{m_1\to m_2}\,H_{m_1\to m_2}(T-t_j;\,\theta_{h\mid m_1\to m_2})\Big)\\
&\qquad\qquad\times \prod_{\substack{j\in J_L:\\ d_j=m_1}}\ \prod_{i\in S_{j,m_2}}
h_{m_1\to m_2}\!\big(t_i-t_j;\,\theta_{h\mid m_1\to m_2}\big).
\end{align*}

\section{Simulation Study -  Constant $\mu_m$} \label{sec_simulated_data}

We now examine simulated data in order to study whether the Ancestor Hawkes can accurately estimate model parameters. We also compare it to the classic Hawkes model to explore the additional flexibility that the Ancestor Hawkes model offers. To this end, we simulate data sets from the Ancestor Hawkes model and provide parameter estimates from both the Ancestor Hawkes and classic Hawkes models. We then examine the parameter posterior distributions obtained in both cases.  In this section, we focus on the case where the background rate $\mu_m$ is constant. A second simulation study, using time-varying background rates matched to those estimated from the real data, is presented later in Section \ref{sec:timevarying}.

For now, we simulate data from the Ancestor Hawkes model using Exponential decay kernels,  under the following three different parameter settings:

\begin{itemize}
\item Scenario 1: M (number of dimensions) = 3. Parameters are chosen as $\mu_m = 0.05$ for all $m = 1 \dots M$, $K_{m_1 \to m_2} = 0.6$ for $m1, m2 = 1 \dots M$, $L_{m_1 \to m_2} = 0.3$ where $m_1 = m_2$ and $L_{m_1 \to m_2} = 0.05$  where $m_1 \neq m_2$. Moreover, we set $ \beta_{m_1 \to m_2} = 2$ and $\gamma_{m_1 \to m_2} = 0.5$ (diagonal/off-diagonal elements hence share the same rate within each family in this scenario).

\item Scenario 2: M = 4 with the following parameters: $\mu_1 =0.05, \mu_2=0.07, \mu_3=0.04, \mu_4 = 0.06$ with influence magnitudes:

\[
\mathbf{K}=
\begin{pmatrix}
0.18 & 0.12 & 0.00 & 0.10 \\
0.00 & 0.16 & 0.12 & 0.00 \\
0.10 & 0.00 & 0.17 & 0.12 \\
0.12 & 0.10 & 0.00 & 0.15
\end{pmatrix}
\qquad
\mathbf{L} =
\begin{pmatrix}
0.30 & 0.22 & 0.20 & 0.00 \\
0.20 & 0.28 & 0.00 & 0.18 \\
0.22 & 0.20 & 0.26 & 0.00 \\
0.00 & 0.22 & 0.20 & 0.30
\end{pmatrix}
\]
and influence kernel rate matrices:

\[
\boldsymbol{\beta} =
\begin{pmatrix}
4 & 3 & 3 & 3 \\
3 & 4 & 3 & 3 \\
3 & 3 & 4 & 3 \\
3 & 3 & 3 & 4
\end{pmatrix}
\qquad
\boldsymbol{\gamma}:=
\begin{pmatrix}
0.8 & 0.5 & 0.5 & 0.5 \\
0.5 & 0.8 & 0.5 & 0.5 \\
0.5 & 0.5 & 0.8 & 0.5 \\
0.5 & 0.5 & 0.5 & 0.8
\end{pmatrix}
\]

This represents a more challenging problem than Scenario 1 because the background rates differ across dimensions,  $\mathbf{K}$ and $\mathbf{L}$  vary entry-wise with some sparsity, and the kernel rates are heterogeneous.

\item Scenario 3: A potential concern with the Ancestor Hawkes model is that identifiability may come mainly from differences in the influence kernel shapes rather than from differences in the influence magnitudes themselves. To test this, we reuse the $\mu_m$, $K$ and $L$ matrices from Scenario 2, but now set all kernel rates equal by taking $\beta = \gamma = 2.4$ for every entry.

\end{itemize}

For each set of parameter values, we simulate a point process with $5000$ events (hence the horizon  $T$ is random). For simplicity, when estimating both the standard and Ancestor Hawkes models we restrict the fitted kernel families to two rate parameters per model -- a shared diagonal rate and a shared off-diagonal rate. Specifically, in both the Ancestor and standard Hawkes models we assume

\begin{align}
    g_{j \to m}(x) & = \beta_{\text{diag}}\, \text{exp} \left( -\beta_{\text{diag}} \, x \right), & \text{if } j = m, \\
    g_{j \to m}(x) & = \beta_{\text{off}}\, \text{exp} \left( -\beta_{\text{off}} \, x \right) ,& \text{if } j \neq m.
    \label{eqn:restriction}
\end{align}
 and in the Ancestor Hawkes model we also have:

\begin{align}
     h_{j \to m}(x) & = \gamma_{\text{diag}}\, \text{exp} \left( -\gamma_{\text{diag}} \, x \right), & \text{if } j = m, \\
    h_{j \to m}(x) & = \gamma_{\text{off}}\, \text{exp} \left( -\gamma_{\text{off}} \, x \right), & \text{if } j \neq m.
    \label{eqn:restriction2}
\end{align}

The influence magnitude matrices $\mathbf{K}$ and $\mathbf{L}$ are unconstrained and we estimate the full matrices. 

We use the following priors in both models, with the priors on the influence magnitudes intended to shrink slightly towards 0 in order to handle cases where sparsity may be present:

\begin{align*}
    \mu_m & \sim \text{ Gamma}(1, 1),   & \text{for } m = 1 \dots M, \\
    K_{m_1 \to  m_2} & \sim \text{ Gamma}(1, 10), & \text{for } m_1 = 1 \dots M, m_2 = 1 \dots M, \\
    \beta_{\text{diag}} & \sim \text{ Gamma}(2, 1), & \\
    \beta_{\text{off}} & \sim \text{ Gamma}(2, 1). & 
\end{align*}

The Ancestor Hawkes has the additional  priors:
\begin{align*}
    L_{m_1 \to m_2} & \sim \text{ Gamma}(1, 10),  & \text{for } m_1 = 1 \dots M, m_2 = 1 \dots M, \\
     \gamma_{\text{diag}} & \sim \text{ Gamma}(2, 1), & \\
    \gamma_{\text{off}} & \sim \text{ Gamma}(2, 1). & 
\end{align*}

For all simulation experiments we ran the Gibbs sampler for 20,000 iterations and discarded the first 5,000 as burn-in. The branching variables, background-rate parameters, and influence magnitudes were updated from their full conditional distributions. The kernel-rate parameters were updated using slice sampling.

\subsection{Results}

\subsubsection{Scenario 1}

For each  scenario, we simulated 200 data sets from the model, using the specified parameters. For each simulated data set, the Gibbs sampler was used to draw 20,000 values from the parameter posterior distributions in both the Ancestor and standard Hawkes models, with the first 5,000 samples discarded as burn-in. Figure \ref{fig:experiment1} shows the posterior distribution for the background rates and the kernel shape parameters. It can be seen that under the Ancestor Hawkes model, the posterior distributions correctly concentrate around the true values, while the classic Hawkes model estimates a higher background rate in all dimensions, showing a structural bias.

\begin{figure}[t]
  \centering

  \begin{subfigure}{\linewidth}
    \centering
    \includegraphics[width=0.8\linewidth]{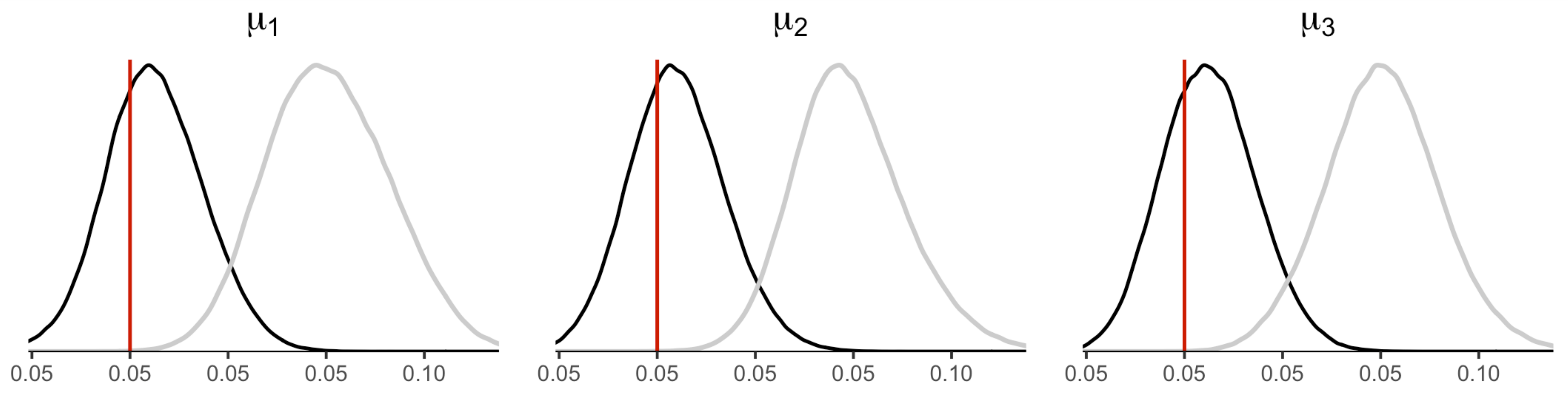}
    \caption{Posteriors for $\mu_1,\mu_2,\mu_3$. The black line shows the posterior distribution in the Ancestor Hawkes model, the grey line shows the posterior distribution in the classic Hawkes model. The vertical red lines indicate the true parameter values.}
  \end{subfigure}

  \vspace{0.6em}

  \begin{subfigure}{\linewidth}
    \centering
    \includegraphics[width=1.1\linewidth]{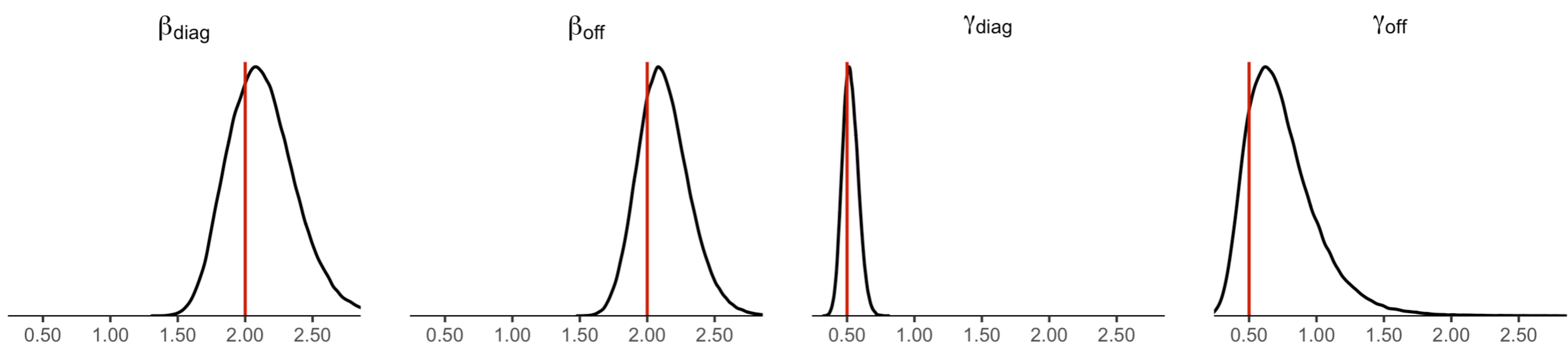}
    \caption{Posteriors for $\beta_{\mathrm{diag}}, \beta_{\mathrm{off}}, \gamma_{\mathrm{diag}}, \gamma_{\mathrm{off}}$ in the Ancestor Hawkes model. The vertical red lines indicate the true parameter values.}
  \end{subfigure}

  \caption{Scenario 1 posterior distributions for $\mu_m$ and the kernel rates.}
  \label{fig:experiment1}
\end{figure}

Next, Figure~\ref{fig:experiment1K} shows the posterior distributions for all influence magnitude parameters. In the classic Hawkes model, this is just the single $\Kbold$ matrix, while for the Ancestor Hawkes both $\Kbold$ and $\Lbold$ are estimated. Here it is evident that the posterior distributions for the classic Hawkes model attempt to find a middle ground between the distributions of $K$ and $L$, essentially blurring the two together. In contrast, the posteriors under the Ancestor model correctly concentrate around the true parameter values. The posteriors for the $K$ parameters exhibit a mild left bias, which is consistent with the shrinkage effect of the $\text{Gamma}(1,10)$ prior.

\begin{figure}[t]
    \centering
    \includegraphics[width=.8\linewidth]{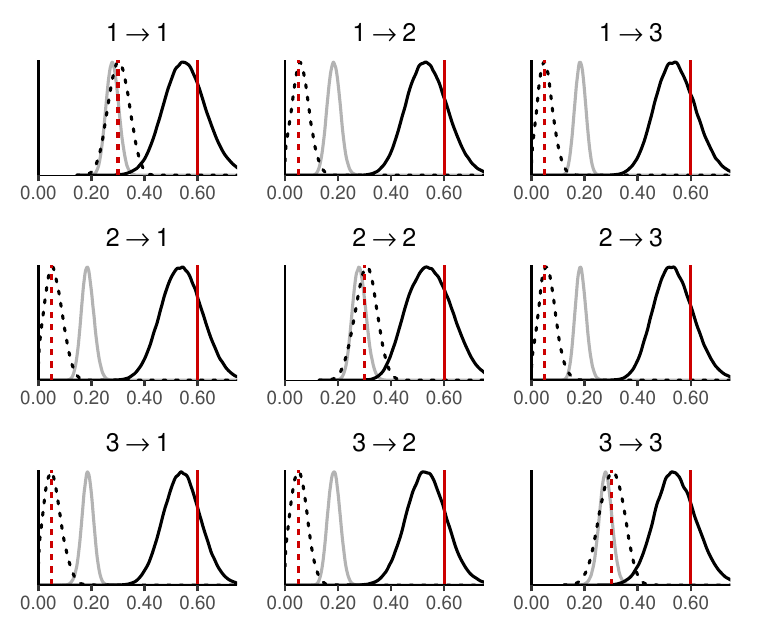}
    \caption{Scenario 1: Posterior distribution estimates for the influence magnitudes. The black solid and dashed lines respectively show the posterior distribution for $\Kbold$ and $\Lbold$ in the Ancestor Hawkes model, while the red vertical lines show the true parameter values. The grey lines show the corresponding $\Kbold$ posterior distribution in the classic Hawkes model -- note that the classic Hawkes seems to interpolate between $\Kbold$ and $\Lbold$. } 
    \label{fig:experiment1K}
\end{figure}

\subsubsection{Scenarios 2 and 3}

Scenarios 2 and 3 are intended to demonstrate that the Ancestor Hawkes model correctly recovers the model parameters in a more complicated situation where the background rates and influence magnitudes vary across dimensions. The simulation strategy is the same as above, using 200 different data sets simulated from the scenario parameters, with 20,000 Gibbs samples taken for each data set. Figure \ref{fig_KLK_comparison_sim}  shows the averaged posterior distribution for $\Kbold$ and $\Lbold$ in both scenarios. It can be seen that these posteriors correctly concentrate around the true values, as expected. Notably, the posterior distributions in both scenarios are extremely similar, demonstrating that the correct identification of the influence magnitudes does not depend on having a clear separation between the influence kernel shapes, since in Scenario 3 the immigrant and triggered kernel rate matrices are identical.

\begin{figure}[!htp]
    \centering
    \includegraphics[width= 1.05\linewidth]{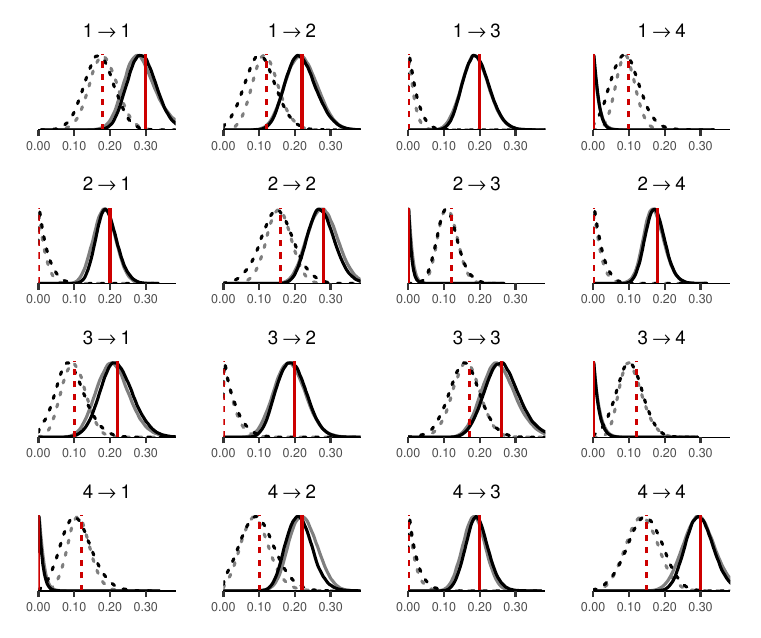}
    \caption{Scenarios 2 and 3: Average posterior distributions for the Ancestor Hawkes influence magnitudes. The solid lines show the posterior distribution for $\Kbold$ and the dashed lines show the posterior distribution for $\Lbold$. The black lines correspond to Scenario 2 while the grey lines (which mostly overlap with the black) correspond to Scenario 3. The vertical red lines indicate the true values (which are solid for $\Kbold$ and dashed for $\Lbold$).} 
    \label{fig_KLK_comparison_sim}
\end{figure}

\section{Group Chat Data Application} \label{sec_groupchat_app}

To highlight the benefits of the Ancestor Hawkes model we have collected data from a group chat setting. The data was downloaded from Facebook Messenger, where the chat was hosted by one of the authors of this study. All chat participants have given consent for their anonymized data to be analyzed and the author obtained the required ethical approval through their host institution. The anonymised timestamp/sender data used in this analysis are publicly available on GitHub at \url{https://github.com/isabella-de/AncestorHawkes}.

The data comes from a chat group with $9$ participants. Here, each message sent to the group chat is simultaneously received by all members and everyone has equal opportunity to participate. This setting is substantially different from, for example, the email data set examined by \citet{miscouridou_modelling_2018}, where each email has exactly one receiver. 

The full data set spans $10,705$ messages over approximately $3.5$ years (from $11^{th}$ of October $2019$ to $29^{th}$ of April 2023). The group chat was started with a subset of participants and additional members were added over time, but for most of the examined period, all $9$ participants were part of the group chat. The data set only contains the sender and times at which a message was sent. No additional information (e.g.\ content of the message, whether it was a reply) is available. Figure~\ref{fig_no_messages} displays the total number of messages sent by each person. Figure~\ref{fig_observations} plots a subset of messages during a one-year period. 

\begin{figure}[!htp]
    \centering
    \includegraphics[width=.99\linewidth]{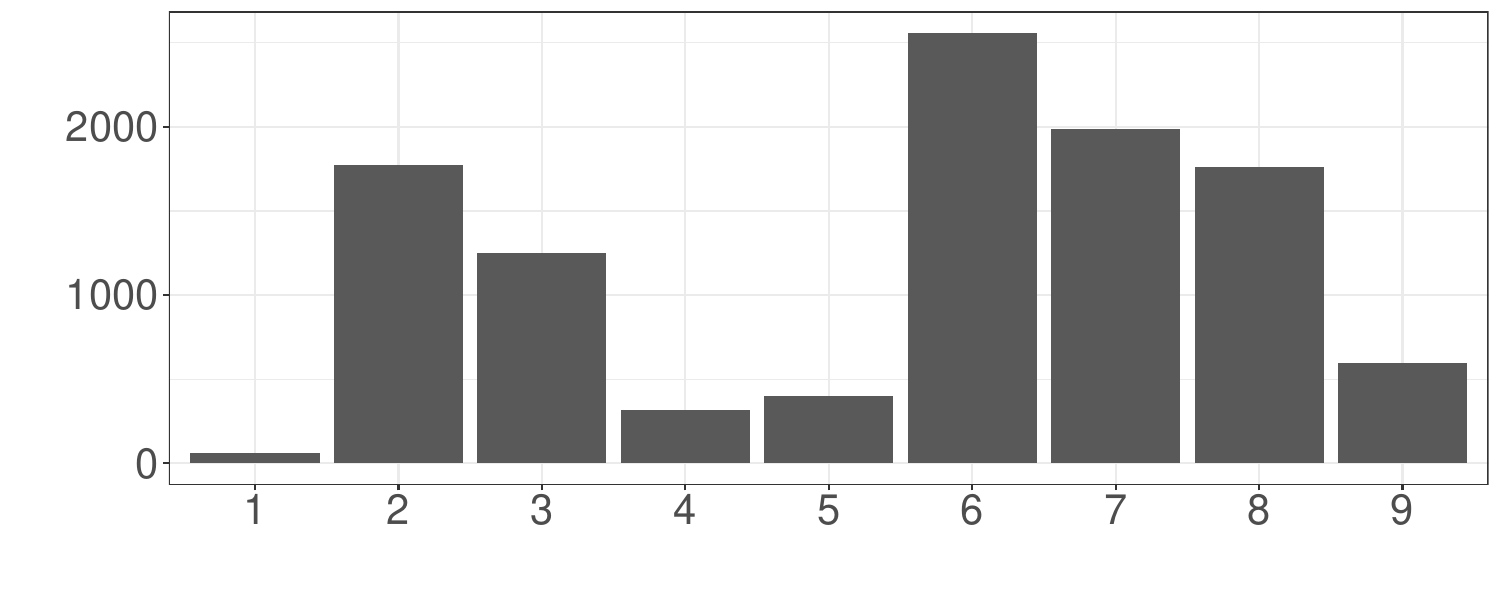}
    \caption{Total number of messages sent by each participant in the group chat.} 
    \label{fig_no_messages}
\end{figure}

\begin{figure}[!htp]
    \centering
    \includegraphics[width=.99\linewidth]{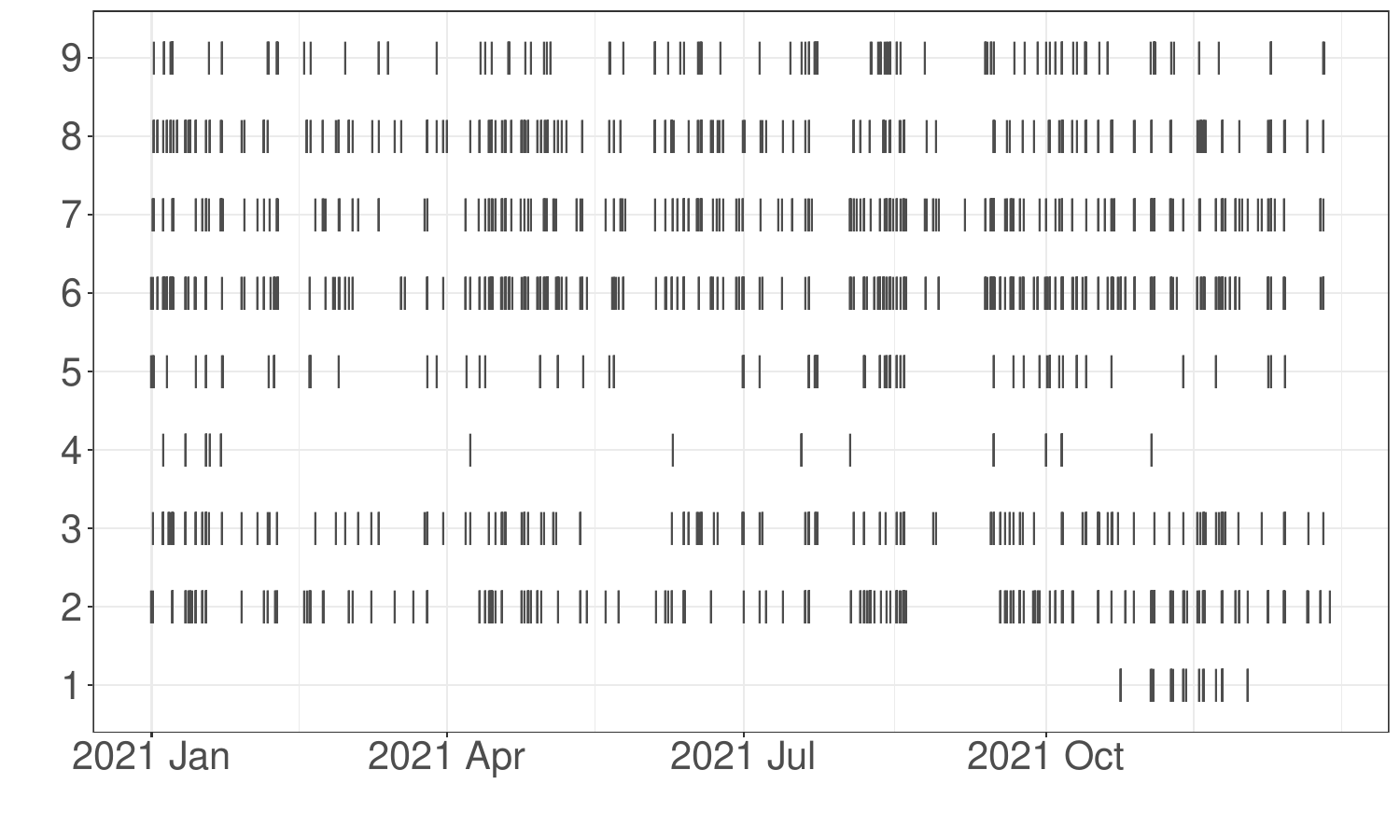}
    \caption{Messages sent by each participant in the group chat in $2021$. All participants had already been added to the chat. Each vertical bar indicates a message sent.} 
    \label{fig_observations}
\end{figure}

First, we examine broad temporal patterns in the data. Figure~\ref{fig_messages_monthly} shows the total number of messages sent per month. This remains somewhat stable over time, apart from a particularly high number of messages in March 2020. This coincides with the start of lockdowns across Europe (where all participants were based at that time) due to the Covid-19 pandemic.

\begin{figure}[!htp]
    \centering
    \includegraphics[width=.99\linewidth]{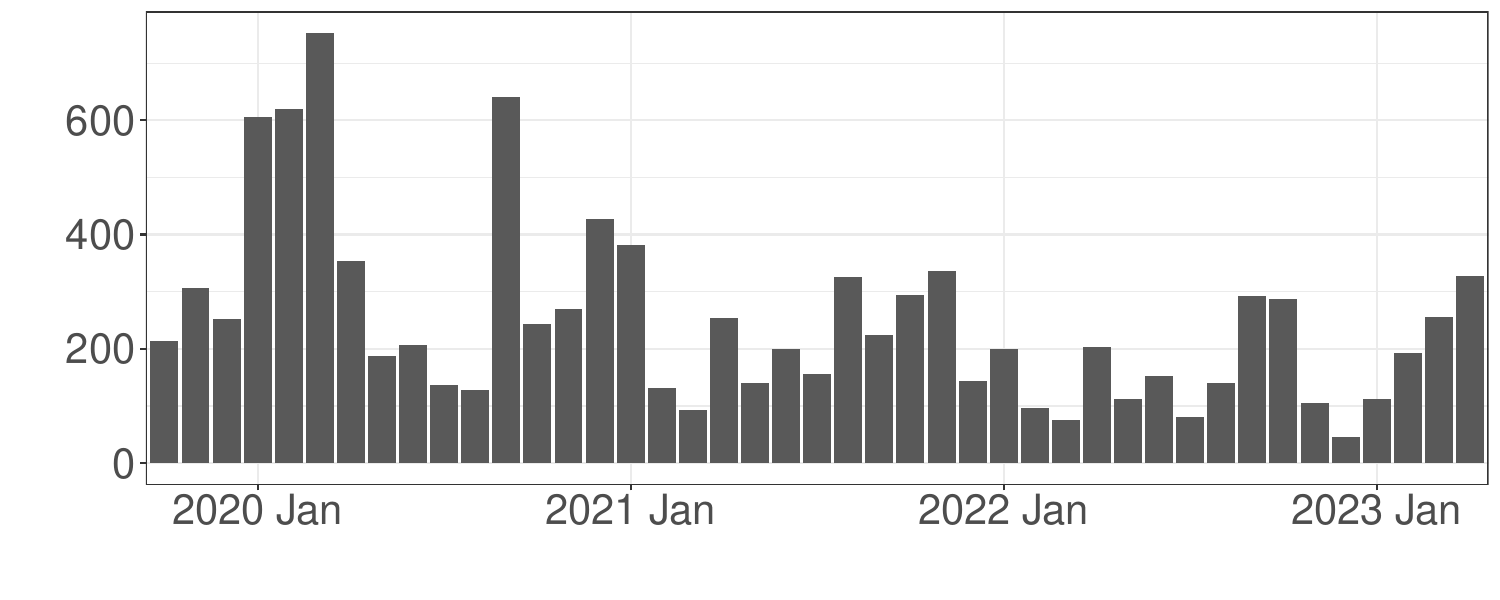}
    \caption{Number of messages sent per month.} 
    \label{fig_messages_monthly}
\end{figure}

Figure~\ref{fig_messages_weekday} shows the number of messages sent for each day of the week. The colour indicates which hour of the day a message was sent. The hours of the day are grouped into four categories each containing a quarter of the day. While there is a big difference within a day between working hours (biggest region in pink on the bottom) and late at night (barely visible orange on the top), these numbers stay similar across days of the week.

\begin{figure}[!htp]
    \centering
    \includegraphics[width=.99\linewidth]{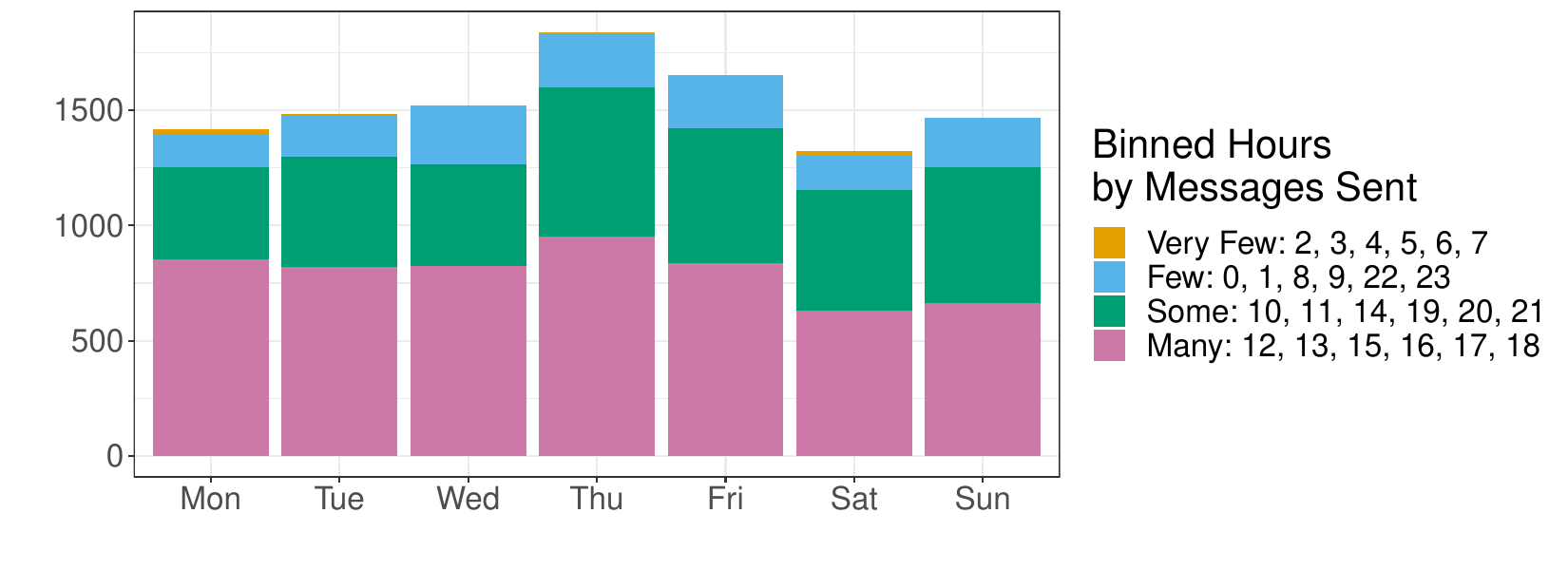}
   \caption{Number of messages sent per weekday. The colour indicates which hour of the day (grouped into four bins as described in the legend) the messages were sent. } 
    \label{fig_messages_weekday}
\end{figure}

For our analysis, we  model the messages that were sent in $2021$ only.

\subsection{Model}

To model the temporal patterns of messages being sent, we utilise an Ancestor Hawkes model where each participant is represented by a dimension and its respective intensity function. As a comparison model, we also fit a classic Hawkes process on the same data.

For both models, a non-constant background rate is used. Hence, the intensity function of the Ancestor Hawkes in dimension $m$ for $m \in \{1,2,\ldots,9\}$ is

\begin{equation}
    \lambda_m(t) = \mu_m(t) + \sum_{\substack{i: t_i < t, \\B_i = 0}} K_{d_i \to m} \, g_{d_i \to m} (t-t_i) + \sum_{\substack{j: t_j < t, \\B_j > 0}} L_{d_j \to m} \, h_{d_j \to m} (t-t_j).
\end{equation}

\subsubsection{Background Rate}

In both the Ancestor and classic Hawkes model,  we specify the  background (immigrant) intensity as a separable, multiplicative seasonal process. For each user $m\in\{1,\dots,9\}$ at time $t\in[0,T_{\text{obs}}]$ we set
\[
\mu_{m}(t)\;=\;\alpha_{m}\;\theta_{\text{hour}}\!\big(h(t)\big)\,
\theta_{\text{wday}}\!\big(w(t)\big)\,
\theta_{\text{month}}\!\big(\mathrm{mon}(t)\big),
\]
where $\alpha_{m}>0$ is a dimension--specific scale and
$\theta_{\text{hour}}:\{1,\dots,24\}\to\mathbb{R}_{\ge0}$,
$\theta_{\text{wday}}:\{1,\dots,7\}\to\mathbb{R}_{\ge0}$,
$\theta_{\text{month}}:\{1,\dots,12\}\to\mathbb{R}_{\ge0}$ are seasonal factors for hour–of–day, day–of–week, and month–of–year. The binning maps are $h(t)\in\{1,\dots,24\}$ (local hour), $w(t)\in\{1,\dots,7\}$ (weekday, with Monday=1), and $\mathrm{mon}(t)\in\{1,\dots,12\}$ (calendar month).

To avoid non–identifiability, each seasonal vector is constrained to have exposure–weighted mean 1 over the observation window. Let
$E(h,\,d,\,m_0)=\int_0^{T_{\text{obs}}}\mathbf{1}\{h(t)=h,\,w(t)=d,\,\mathrm{mon}(t)=m_0\}\,dt$
be the realized exposure (precomputed once as a $24\times7\times12$ tensor), and define weights
$w^{(\text{hour})}_h\propto\sum_{d,m_0}E(h,d,m_0)$,
$w^{(\text{wday})}_d\propto\sum_{h,m_0}E(h,d,m_0)$,
$w^{(\text{month})}_{m_0}\propto\sum_{h,d}E(h,d,m_0)$,
each normalized to sum to one. We enforce
$\sum_h w^{(\text{hour})}_h\,\theta_{\text{hour}}(h)=
\sum_d w^{(\text{wday})}_d\,\theta_{\text{wday}}(d)=
\sum_{m_0} w^{(\text{month})}_{m_0}\,\theta_{\text{month}}(m_0)=1$
and rescale $\alpha$ accordingly, so that $\frac{1}{T_{\text{obs}}}\!\int_0^{T_{\text{obs}}}\mu_m(t)\,dt=\alpha_m$.

Conditional on the current branching structure, the seasonal parameters $(\alpha_m,  \theta_{\text{hour}}, \theta_{\text{wday}}, \theta_{\text{month}})$ are updated inside the Gibbs sampler by applying the inhomogeneous Poisson likelihood to the immigrant-assigned events only, with the exposure-weighted mean-one constraints imposed after each update by rescaling the seasonal factors and compensating through the corresponding \(\alpha_m\).

\subsubsection{Influence Kernel}
As in the simulated data example we limit the number of distinct parameters in the influence kernel (but not in the magnitude matrices $\Kbold$ and $\Lbold$). We assume that the shape is the same for all self-influences, and that a different shape parameter is used for the cross-influences, such that
\begin{align}
    g_{j \to m}(x) & = \beta_{\text{diag}}\, \exp \left( -\beta_{\text{diag}} \, x \right), & \text{if } j = m ,\\
    g_{j \to m}(x) & = \beta_{\text{off}}\,  \exp \left( -\beta_{\text{off}} \, x \right), & \text{if } j \neq m. \nonumber
\end{align}
and:

\begin{align}
     h_{j \to m}(x) & = \gamma_{\text{diag}}\,  \exp \left( -\gamma_{\text{diag}} \, x \right), & \text{if } j = m, \\
    h_{j \to m}(x) & = \gamma_{\text{off}}\,  \exp \left( -\gamma_{\text{off}} \, x \right), & \text{if } j \neq m. \nonumber
\end{align}
Both the $g$ and $h$ kernels are normalised to integrate to one, so that the entries of $\Kbold$ and $\Lbold$ retain their interpretation as expected direct offspring counts. As above, the influence magnitudes $\Kbold$ and $\Lbold$ do not otherwise have any restrictions.

\subsection{Prior Choice}

We complete our model by specifying the parameter priors. We use the same priors on the influence kernel shapes and magnitudes that were previously defined in Section \ref{sec_simulated_data}. The $\alpha_m$ parameters in the seasonal background rate are given the same $\mathrm{Gamma}(1, 1)$ prior that was previously used for $\mu_m$ since they play the same role in the model. We complete the specification by assigning  independent $\mathrm{Gamma}(1,1)$ priors to all the seasonal parameters specifying the hourly, weekly and monthly components of the background rate.

\subsection{Results}

The classic and Ancestor Hawkes models were fitted to the group-chat data. For each model, the Gibbs sampler was used to draw 20,000 values from the parameter posterior distributions, with the first 5,000 samples discarded as burn-in. Representative MCMC trace plots are shown in Appendix A. These plots were used as a basic diagnostic check for stability of the retained chain and show no visible long-term drift in the monitored posterior summaries. In the numerical summaries below, posterior point estimates are posterior means; the figures show the corresponding posterior distributions.

Due to the relatively large number of parameters we only present selected illustrative plots here. First, we investigate the posterior distributions of  influence magnitude parameters. The top row of Figure~\ref{fig_KL_ancestor_only} compares the posterior estimates for $K_{7 \to m}$ and $L_{7 \to m}$, where the special case of $m = 7$ (self-influence) is displayed in a dotted line. These parameters specify the increase in the other participants' intensity functions due to an immigrant message ($K_{ 7\to m}$) or a follow-up message ($L_{7 \to m}$) from the $7^{th}$ participant. It is evident that immigrant messages solicit more responses. This is true across all participants and in line with our understanding of messaging dynamics as an immigrant message may possibly be a question to the whole group (and hence invites replies from everyone) while a triggered/follow-up message is perhaps more likely to be a reply directed at a subset of group members.

The bottom row of Figure~\ref{fig_KL_ancestor_only} depicts the converse parameters $K_{m \to 7}$ and  $L_{m \to 7}$, which control the increase in the message rate of the $7^{th}$ participant in response to an immigrant and triggered message from someone else. For example, this participant is more likely to respond to an immigrant message from participant 3 than to one from participant 5 ($\hat{K}_{3 \to 7} = 0.27$ versus $\hat{K}_{5 \to 7} = 0.05$). In addition, their own messages show strong self-excitation, with $\hat{K}_{7 \to 7} = 0.37$ and  $\hat{L}_{7 \to 7} = 0.22$. Thus an immigrant message from participant 7 generates, on average, $0.37$ direct same-participant triggered messages, while each subsequent triggered message from participant 7 generates, on average, a further $0.22$ direct same-participant triggered messages. Such a comparably large self-influence can be observed for most participants. This corresponds to the intuitive notion that people often send multiple messages about the same topic in rapid succession.

The plots in Figure~\ref{fig_KL_ancestor_only} also show a substantial difference in the behavior of messages in reply chains. For example, we have $\hat{K}_{3 \to 7} = 0.27$ but $\hat{L}_{3 \to 7} = 0.11$. This shows that when user $3$ initiates a new conversation in the group chat (i.e. generate an immigrant message) then user $7$ will send $0.27$ extra messages on average in response. However when the message sent by user $3$ is a reply to an existing conversation (i.e. they produce a triggered event) then user $7$ will only send $0.11$ extra messages on average. This makes intuitive sense and justifies the reasoning between the Ancestor Hawkes model -- an immigrant message which initiates a conservation is likely to induce a fundamentally different reply pattern to a message which is posted in response to an ongoing conversation. In the latter case, the message is likely to be (e.g.) a reply to one specific user and is hence less likely to trigger replies from other users. Similar patterns were observed for all participants in the group chat, not just user $7$.

\begin{figure}
    \centering
    \includegraphics[width=.99\linewidth]{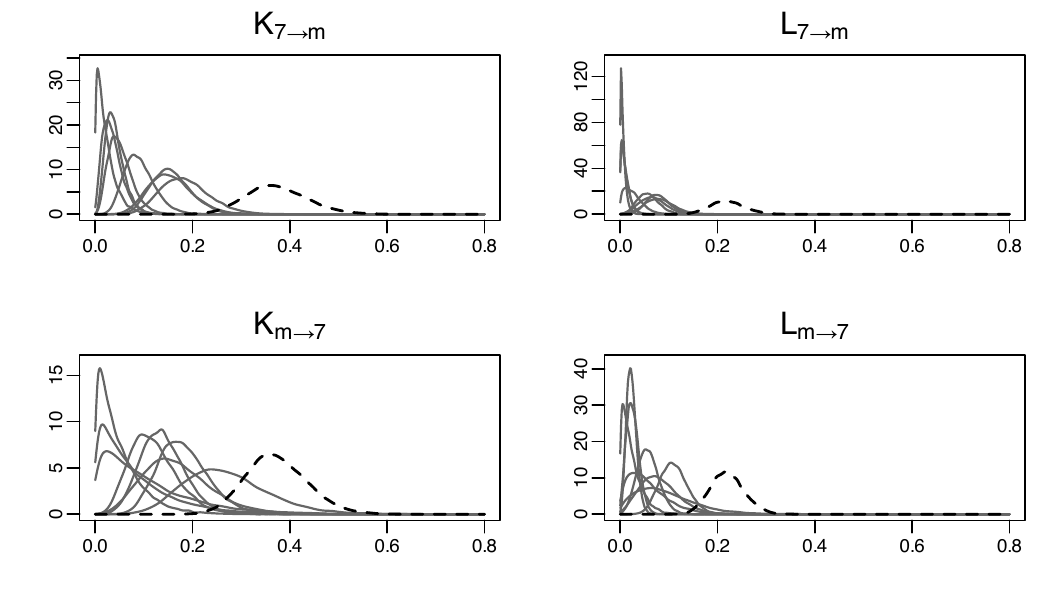}
    \caption{Posterior distributions for the influence magnitude parameters in the Ancestor Hawkes model for the $7^{th}$ participant. The dotted line shows the special case where $m = 7$, i.e.\ self-influence.} 
    \label{fig_KL_ancestor_only}
\end{figure}

Next, Figure \ref{fig_c_scalar} shows the posterior distribution for the $\alpha$ parameter in the background rate, rescaled to correspond to the average number of immigrant messages a day.  Interestingly, Participant~$6$ and $7$ have rather similar values (with posterior means of 0.33 and 0.32 respectively), however participant~$6$ has sent a total of $735$ messages in the analysed data set, whereas Participant~$7$ has only sent $368$ messages. This indicates that Participant~$6$ is more likely to respond to messages, even though both initiate conversations at similar rates.

\begin{figure}
    \centering
    \includegraphics[width=0.7\linewidth]{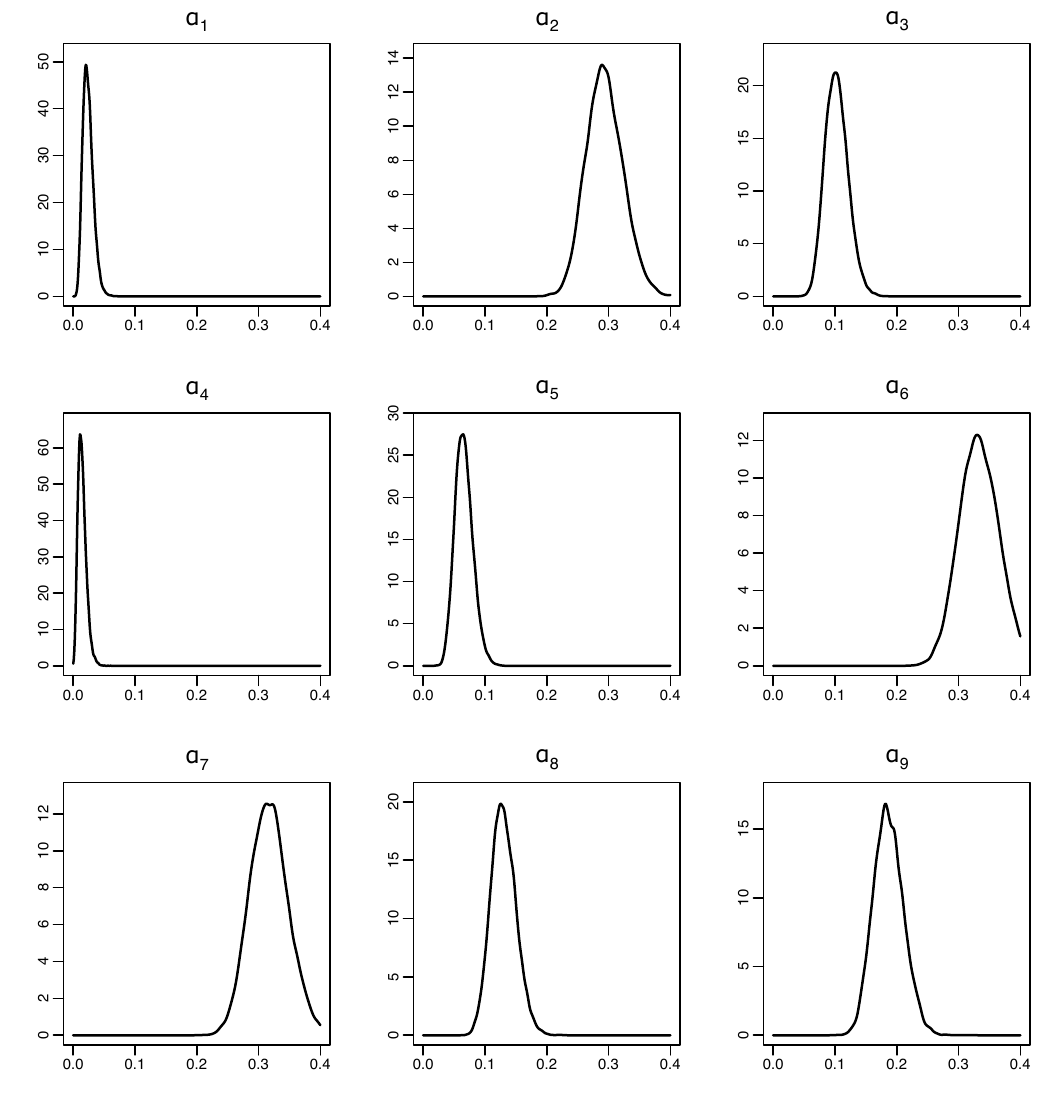}
    \caption{Posterior distributions of the background-rate scale parameters $\alpha_m$ for each participant.} 
    \label{fig_c_scalar}
\end{figure}

Finally, Figure \ref{fig_KLK_comparison} highlights a limitation of the classic Hawkes model. Since it uses only a single influence magnitude for each ordered pair of participants, it cannot distinguish between the effect of immigrant messages, which initiate new conversations, and triggered messages, which are replies within an existing thread. As a result, its posterior distribution often lies between the corresponding Ancestor Hawkes posteriors for $K$ and $L$. The panels in Figure \ref{fig_KLK_comparison} show this clearly: the classic Hawkes estimates tend to average over two qualitatively different influence mechanisms that the Ancestor Hawkes model is able to separate. This provides further evidence that allowing immigrant and triggered messages to have different effects is not merely a modelling convenience, but captures a genuine feature of the group chat dynamics.

\begin{figure}[!htp]
    \centering
    \includegraphics[width=0.7\linewidth]{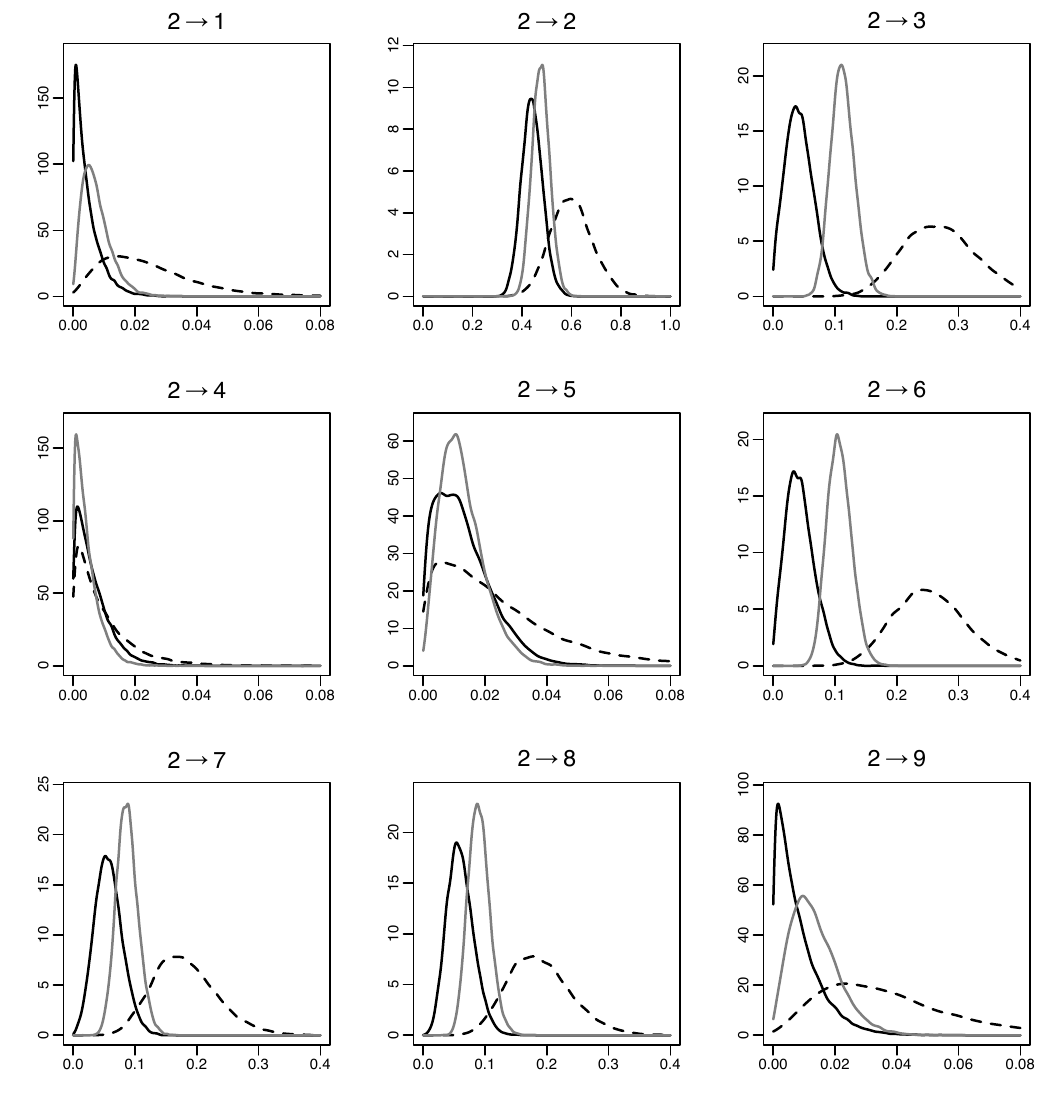}
    \caption{Posterior distribution estimates for the influence magnitude in response to a message  from Participant~$2$. Participants $1$ and $4$ are not shown,  due to their low engagement. The black solid line gives the posterior distribution for $\Kbold$ in the Ancestor Hawkes model and the black dashed line represents posterior estimates for $\Lbold$ in Ancestor Hawkes, the grey line shows the posterior distribution of $\Kbold$ in the classic Hawkes.} 
    \label{fig_KLK_comparison}
\end{figure}

\subsubsection{Summary Statistics} \label{subsec_sumstats_chat}

To test whether the Ancestor Hawkes model is genuinely capturing features of the data that the standard Hawkes model cannot, we analyse the posterior distribution of several summary statistics under both models. This is essentially the standard approach used in posterior predictive checks \cite{gelman1996posterior}. Let $S(\mathbf{Y})$ denote a summary statistic of interest, which computes a scalar quantity from the observed data $\mathbf{Y}$. Let $\boldsymbol{\theta}^{(1)},\ldots,\boldsymbol{\theta}^{(R)}$ denote $R$ samples from the posterior distribution of the parameter vector $\boldsymbol{\theta}$, obtained here via Gibbs sampling. For each $\boldsymbol{\theta}^{(i)}$, we simulate a synthetic data set $\mathbf{Y}^{(i)}$ from the model and compute the corresponding value of the summary statistic, $S(\mathbf{Y}^{(i)})$. The distribution of the $S(\mathbf{Y}^{(i)})$ values is then the posterior predictive distribution of the summary statistic under the model. If the model is a good fit to the data, we would expect the observed value $S(\mathbf{Y})$ to lie close to the centre of this distribution, or at least not too far into its tail. A posterior predictive $p$-value can also be computed from the proportion of the posterior predictive distribution lying above or below $S(\mathbf{Y})$. This closely parallels the frequentist idea of computing a tail probability for a chosen summary statistic under a null model.

In our case, comparing the Ancestor and standard Hawkes models requires choosing summary statistics that capture the clustering structure of the data. This question was considered in \cite{deutsch2021abc}, who study a range of summary statistics in the context of estimating Hawkes processes using Approximate Bayesian Computation. Based on their work, we use the following three statistics:
\begin{enumerate}
\item the mean inter-event time conditional on the inter-event time exceeding the 90th percentile of its distribution,
\item the lag-1 auto-correlation function (ACF) of the inter-event times,
\item the Ripley $K$-statistic for a window of 2 hours, i.e. the average number of events that fall within a 2-hour window after a given event.
\end{enumerate}

Figure~\ref{fig_summary} shows the posterior predictive distributions of these summary statistics under both the Ancestor Hawkes and classic Hawkes models. It can be seen that while the classic Hawkes model does a reasonable job of capturing the auto-correlation in the group chat data, it provides a poorer fit for the other two summary statistics. In particular, it places too much mass on larger values of both the mean upper-tail inter-event time and Ripley’s $K$ statistic. In contrast, the corresponding summary statistics under the Ancestor Hawkes model are much more consistent with the observed data. This suggests that the Ancestor Hawkes model captures aspects of the short-term clustering structure of the group chat that are missed by the classic Hawkes model.

\begin{figure}[!htp]
    \centering
    \includegraphics[width=.85\linewidth]{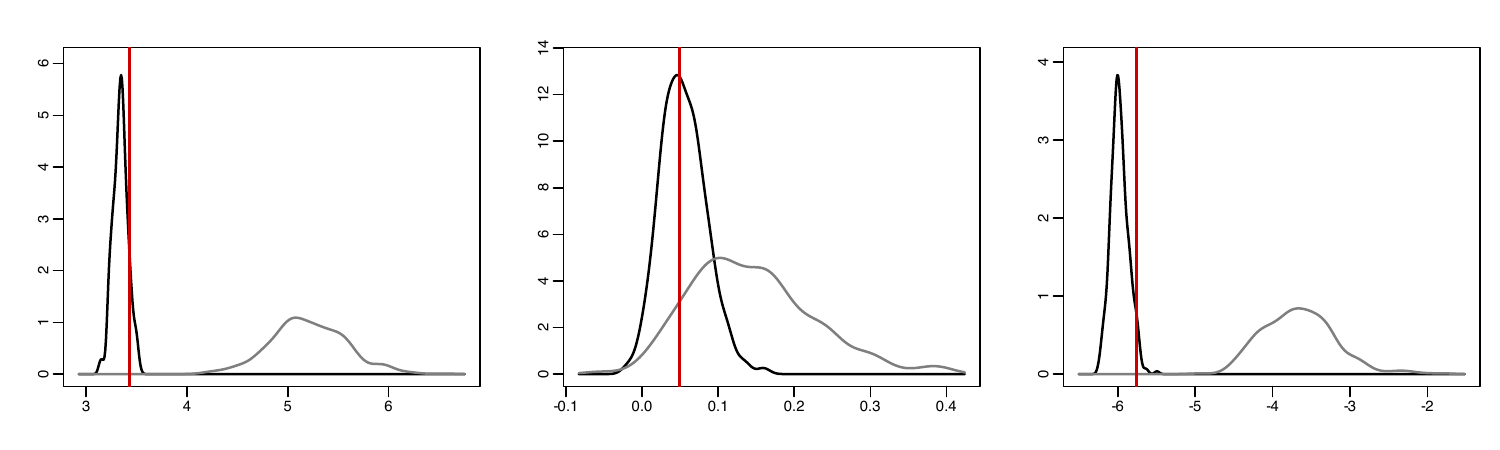}
    \caption{Posterior predictive distributions of three summary statistics for the group chat data. From left to right, the panels show the mean upper-tail inter-event time, the lag-1 autocorrelation of inter-event times, and Ripley's $K$-statistic for a 2-hour window. The black line shows the distribution for data simulated under the Ancestor Hawkes model, while the grey line shows the distribution for data simulated under the classic Hawkes model. The red line shows the observed value of the statistic on the group chat data.}
    \label{fig_summary}
\end{figure}

\begin{figure}[!htp]
    \centering
    \includegraphics[width=.85\linewidth]{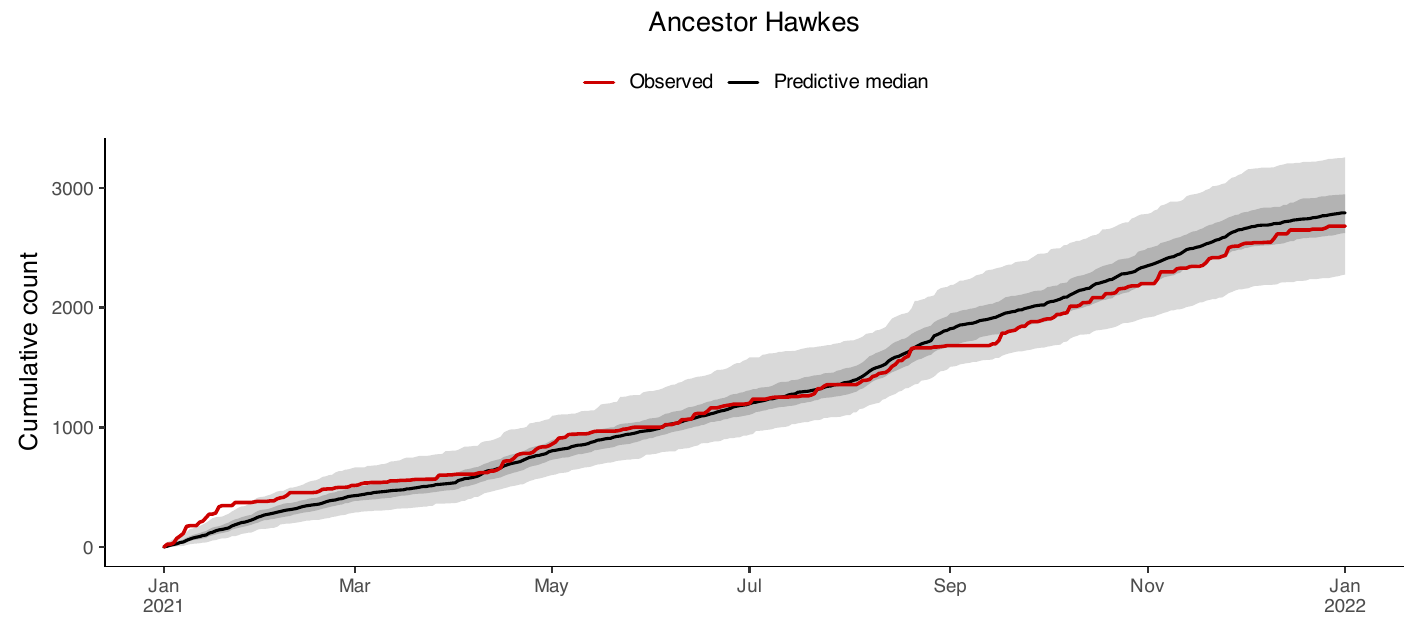}\caption{Posterior predictive cumulative count process for the 2021 group chat data under the Ancestor Hawkes model. The black line shows the posterior predictive median, the grey bands show the 50\% and 95\% posterior predictive intervals, and the red line shows the observed cumulative number of messages over the year.}
    \label{fig_cumulative}
\end{figure}

As one additional diagnostic, Figure~\ref{fig_cumulative} compares the observed cumulative number of messages over 2021 with the posterior predictive cumulative count process under the Ancestor Hawkes model. The observed trajectory lies well within the posterior predictive envelope throughout the year and closely tracks its centre. This suggests that the model is not only capturing local clustering features, as reflected by the summary statistics above, but is also reproducing the broader non-stationary evolution of message activity over calendar time. In particular, the fit appears adequate both in terms of the overall number of events and in how this activity accumulates over the course of the year.

\section{Simulation Study -- Time-varying \texorpdfstring{$\mu_m(t)$}{mu m(t)}}
\label{sec:timevarying}

We conclude with a final simulation study designed to mimic the structure of the real group-chat application. The simulation studies in Section~\ref{sec_simulated_data} used constant background rates and lower-dimensional parameter settings. Here, we instead simulate from a fitted 9-dimensional Ancestor Hawkes model with the same time-varying seasonal background-rate structure used in Section~\ref{sec_groupchat_app}.

Specifically, we take the posterior mean of the parameters estimated from the group-chat data in Section~\ref{sec_groupchat_app}, including the participant-specific background rates, the hour-of-day, day-of-week and month-of-year seasonal effects, the influence matrices \(K\) and \(L\), and the kernel-rate parameters. Treating these posterior means as the generating parameters, we simulate 200 new data sets from the Ancestor Hawkes process over the same observation window. We then refit the Ancestor Hawkes model to each simulated data set using the same priors and MCMC scheme as in the real-data analysis.

This experiment is intended as a recovery check in a setting close to the empirical application. Since the data are generated from the fitted Ancestor Hawkes model, we can compare the estimated influence matrices with the known generating matrices. This directly assesses whether the inference procedure can recover the type of interaction structure interpreted in the group-chat analysis.

Figure~\ref{fig:realdata_sim_recovery} shows the results. The left and middle columns compare the generating influence matrices with the average posterior mean estimates obtained after refitting the model to the 200 simulated data sets. The top row corresponds to the immigrant-event influence matrix \(K\), while the bottom row corresponds to the triggered-event influence matrix \(L\). The recovered matrices closely match the generating matrices, including the main high-influence entries and the broader pattern of heterogeneity across participant pairs.

\begin{figure}
    \centering
    \includegraphics[width=\linewidth]{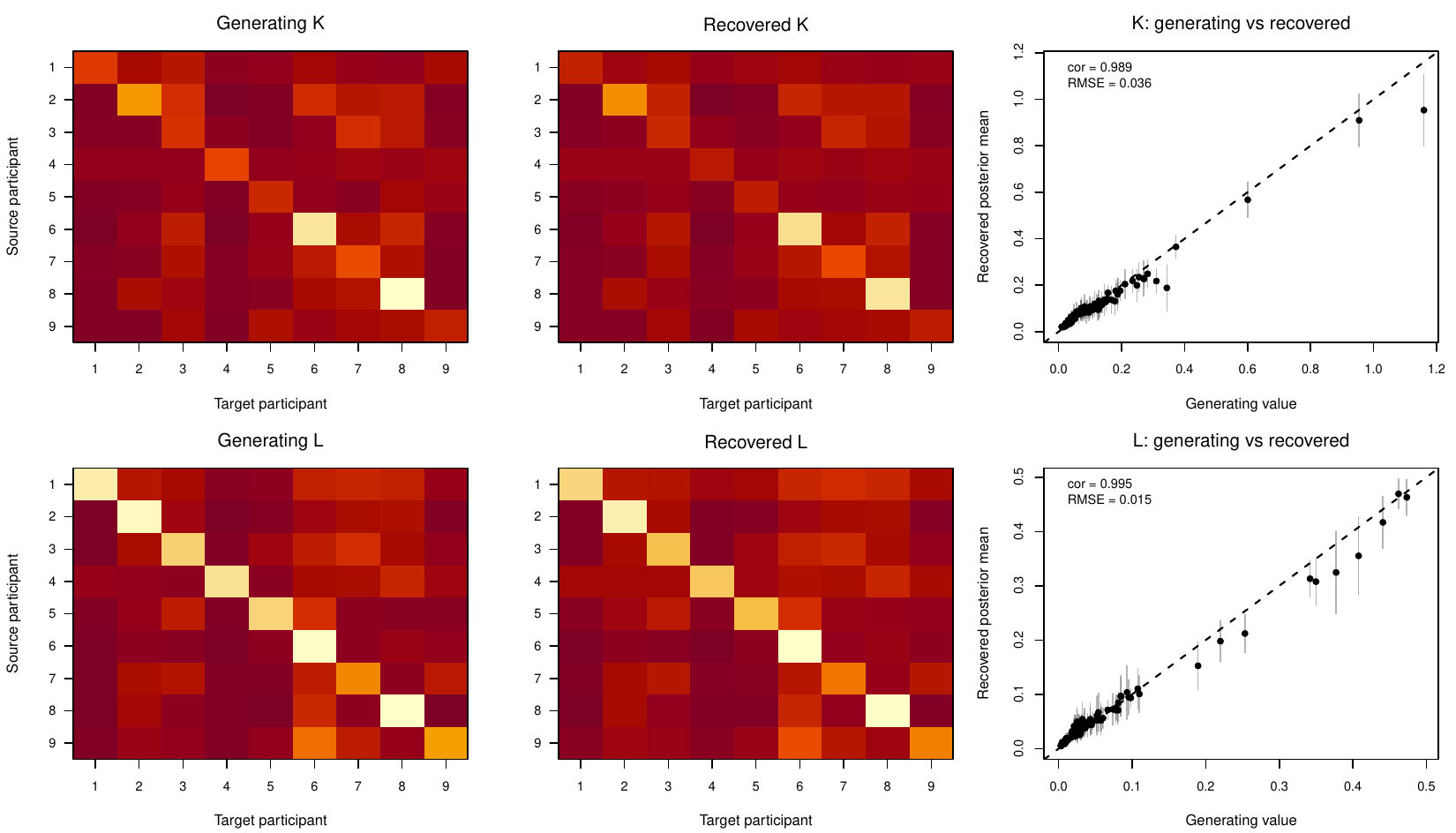}
    \caption{Real-data-like simulation study. Data were simulated from the Ancestor Hawkes model using posterior mean parameters from the group-chat application, including the fitted seasonal background rate. The left and middle columns compare the generating influence matrices with the average posterior mean estimates obtained after refitting the Ancestor Hawkes model to 200 simulated data sets. The right column plots generating values against recovered posterior means for all ordered participant pairs; vertical bars show one standard deviation of the posterior mean estimates across the 200 simulated data sets. The top row corresponds to immigrant-event influence \(K\), and the bottom row to triggered-event influence \(L\).}
    \label{fig:realdata_sim_recovery}
\end{figure}

The right column of Figure~\ref{fig:realdata_sim_recovery} gives a more direct entrywise comparison. Each point corresponds to one ordered participant pair, with the horizontal coordinate giving the generating value and the vertical coordinate giving the average recovered posterior mean. Vertical bars show one standard deviation of the recovered posterior means across the 200 simulated data sets. The entrywise correlation between the generating and recovered influence matrices is \(0.989\) for \(K\) and \(0.995\) for \(L\), with RMSEs \(0.036\) and \(0.015\), respectively. The largest entries of \(K\) show some mild shrinkage, consistent with the shrinkage prior also observed in the constant-background simulations, but the overall recovery is strong.

Overall, this simulation supports the interpretation of the fitted group-chat model in Section~\ref{sec_groupchat_app}. In a real-data-like setting with 9 dimensions, a time-varying seasonal background rate, and influence parameters taken from the empirical analysis, the corrected Ancestor Hawkes sampler is able to recover the main immigrant- and triggered-event influence structures.

\section{Discussion} \label{sec_discussion_anc}

In this paper, we introduced the Ancestor Hawkes process, a multivariate Hawkes model in which the influence of an event depends not only on the dimension in which it occurred, but also on whether it was an immigrant or a triggered event. This provides a natural extension of the classic Hawkes process in settings where conversation-initiating events and follow-up events are expected to play qualitatively different roles. The model retains a clear branching-process interpretation while allowing a richer description of local interaction dynamics.

Our simulation study showed that when immigrant and triggered events have genuinely different effects, the classic Hawkes model tends to blur these effects together, whereas the Ancestor Hawkes model is able to recover them. In the group chat application, this distinction revealed meaningful differences in reply behavior that cannot be expressed as cleanly under the classic Hawkes process. Importantly, this was achieved using only the times at which messages were sent and the identity of the sender, without requiring access to message content. In this sense, the proposed model offers a method for studying response preferences and interaction structure in conversational data.

There are several natural directions for future work. Additional covariates could be incorporated into the model, particularly in messaging applications where explicit reply information is available and could help inform the latent branching structure. It would also be natural to allow the influence of an event to depend on message type, for example whether it is text, an image, a GIF, or an emoji-only message. More broadly, it would be interesting to develop faster estimation procedures and to study these models in larger multivariate settings where the distinction between exogenous starts and reply-driven propagation may be especially important.

\section*{Data availability}

The anonymised group-chat data used in this paper, consisting only of message timestamps and sender identifiers, are available at \url{https://github.com/isabella-de/AncestorHawkes}. No message content is included.

\section*{Statements and Declarations}

\textbf{Funding.} The authors received no specific funding for this work.

\textbf{Competing interests.} The authors have no competing interests to declare that are relevant to the content of this article.

\textbf{Ethics approval.} Ethical approval for the analysis of the group-chat data was obtained through the authors' host institution.

\textbf{Consent to participate.} All participants in the group chat gave consent for their anonymised timestamp/sender data to be analysed.

\clearpage
\appendix

\section{MCMC Diagnostics}
\label{app:mcmc-diagnostics}

\begin{figure}[!htp]
    \centering
    \includegraphics[width=.85\linewidth]{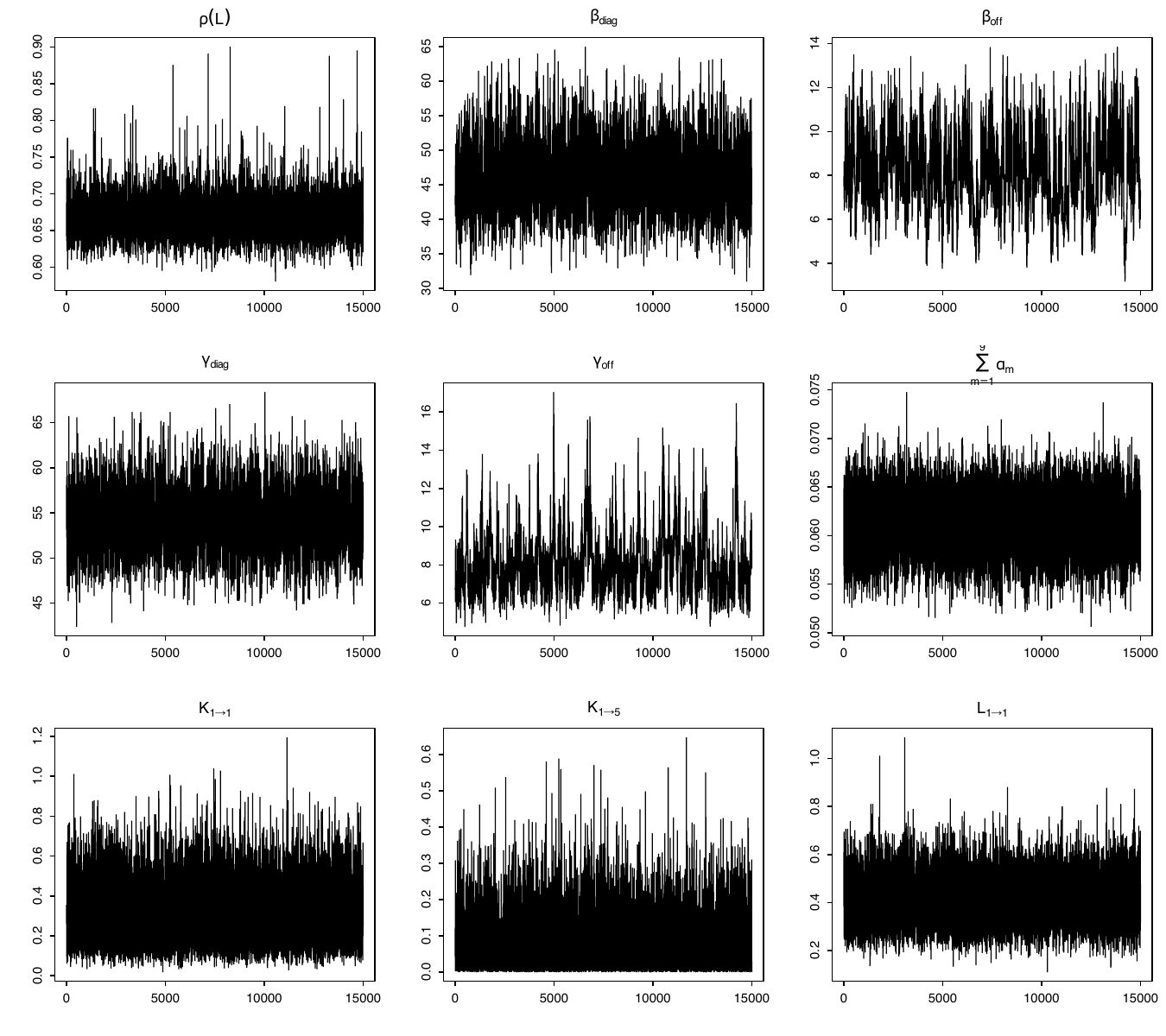}
    \caption{MCMC trace plots for selected posterior quantities in the group-chat data. 
    Here, $\rho(L)$ denotes the spectral radius of $L$, which must be less than 1 
    for stability.}
    \label{fig:mcmc-diagnostics}
\end{figure}

\clearpage
\bibliography{references}

@article{schatz2022arma,
  author  = {Schatz, Michael and Wheatley, Spencer and Sornette, Didier},
  title   = {The {ARMA} Point Process and its Estimation},
  journal = {Econometrics and Statistics},
  volume  = {24},
  pages   = {164--182},
  year    = {2022},
  doi     = {10.1016/j.ecosta.2021.11.002}
}

@InProceedings{achab2017,
  author    = {Achab, Massil and Bacry, Emmanuel and Ga{\"i}ffas, St{\'e}phane
               and Mastromatteo, Iacopo and Muzy, Jean-Fran{\c{c}}ois},
  title     = {Uncovering Causality from Multivariate Hawkes Integrated Cumulants},
  booktitle = {Proceedings of the 34th International Conference on Machine Learning},
  editor    = {Precup, Doina and Teh, Yee Whye},
  series    = {Proceedings of Machine Learning Research},
  volume    = {70},
  pages     = {1--10},
  publisher = {PMLR},
  address   = {Sydney, Australia},
  year      = {2017},
}

@article{deutsch2025cannibalisation,
  author  = {Deutsch, Isabella and Ross, Gordon J.},
  title   = {Estimating Product Cannibalisation in Wholesale using Multivariate Hawkes Processes with Inhibition},
  journal = {The Annals of Applied Statistics},
  volume  = {19},
  number  = {1},
  pages   = {235--260},
  year    = {2025},
}

@article{gelman1996posterior,
  author  = {Gelman, Andrew and Meng, Xiao-Li and Stern, Hal},
  title   = {Posterior Predictive Assessment of Model Fitness via Realized Discrepancies},
  journal = {Statistica Sinica},
  volume  = {6},
  number  = {4},
  pages   = {733--760},
  year    = {1996}
}

@misc{deutsch2021abc,
  author       = {Deutsch, Isabella and Ross, Gordon J.},
  title        = {ABC Learning of Hawkes Processes with Missing or Noisy Event Times},
  year         = {2021},
  eprint       = {2006.09015},
  archivePrefix= {arXiv},
  primaryClass = {stat.AP},
  note         = {arXiv:2006.09015}
}

@article{Pitkin2024,
	title = {Bayesian hierarchical modelling of sparse count processes in retail analytics},
	journal = {Annals of Applied Statistics},
	author = {Pitkin, James and Manoloupoulou, Ioanna and Ross, Gordon},
	year = {2024},
}

@Article{Kolev2019,
author="Kolev, Aleksandar A.
and Ross, Gordon J.",
title="Inference for {ETAS} models with non-{P}oissonian mainshock arrival times",
journal="Statistics and Computing",
year="2019",
volume="29",
number="5",
pages="915--931"
}

@article{Veen_2008_estimation,
 author = {Alejandro Veen and Frederic P. Schoenberg},
 journal = {Journal of the American Statistical Association},
 number = {482},
 pages = {614--624},
 publisher = {[American Statistical Association, Taylor & Francis, Ltd.]},
 title = {Estimation of Space-Time Branching Process Models in Seismology Using an {EM}-Type Algorithm},
 volume = {103},
 year = {2008}
}

@article{rasmussen_bayesian_2013,
	title = {Bayesian Inference for {Hawkes} Processes},
	volume = {15},
	number = {3},
	journal = {Methodology and Computing in Applied Probability},
	author = {Rasmussen, Jakob Gulddahl},
	year = {2013},
	pages = {623--642}
}

@article{hawkes_spectra_1971,
	title = {Spectra of some self-exciting and mutually exciting point processes},
	volume = {58},
	number = {1},
	journal = {Biometrika},
	author = {Hawkes, Alan G.},
	year = {1971},
	pages = {83--90}
}

@article{rizoiu_tutorial_2017,
	title = {A Tutorial on {Hawkes} Processes for Events in Social Media},
	journal = {arXiv},
	author = {Rizoiu, Marian-Andrei and Lee, Young and Mishra, Swapnil and Xie, Lexing},
	year = {2017}
}

@article{ogata_statistical_1988,
	title = {Statistical Models for Earthquake Occurrences and Residual Analysis for Point Processes},
	volume = {83},
	number = {401},
	journal = {Journal of the American Statistical Association},
	author = {Ogata, Yosihiko},
	year = {1988},
	pages = {9--27}
}

@article{hawkes_cluster_1974,
 author = {Alan G. Hawkes and David Oakes},
 journal = {Journal of Applied Probability},
 number = {3},
 pages = {493--503},
 title = {A Cluster Process Representation of a Self-Exciting Process},
 volume = {11},
 year = {1974}
}

@inproceedings{miscouridou_modelling_2018,
  author    = {Miscouridou, Xenia and Caron, Fran{\c{c}}ois and Teh, Yee Whye},
  title     = {Modelling sparsity, heterogeneity, reciprocity and community structure
               in temporal interaction data},
  booktitle = {Advances in Neural Information Processing Systems 31},
  editor    = {Bengio, S. and Wallach, H. and Larochelle, H. and Grauman, K.
               and Cesa-Bianchi, N. and Garnett, R.},
  pages     = {2349--2358},
  publisher = {Curran Associates, Inc.},
  address   = {Red Hook, NY, USA},
  year      = {2018},
}

@Preamble{ " \newcommand{\noop}[1]{} " }

@article{molkenthin_gp-etas_2022,
	title = {{GP}-{ETAS}: semiparametric {Bayesian} inference for the spatio-temporal epidemic type aftershock sequence model},
	volume = {32},
	number = {2},
	urldate = {2022-08-03},
	journal = {Statistics and Computing},
	author = {Molkenthin, Christian and Donner, Christian and Reich, Sebastian and Zöller, Gert and Hainzl, Sebastian and Holschneider, Matthias and Opper, Manfred},
	year = {2022},
	pages = {29},
}

@article{bacry_hawkes_2015,
	title = {Hawkes {Processes} in {Finance}},
	volume = {01},
	number = {01},
	journal = {Market Microstructure and Liquidity},
	author = {Bacry, Emmanuel and Mastromatteo, Iacopo and Muzy, Jean-François},
	year = {2015},
	pages = {1550005}
}

@article{mohler2013modeling,
  title={Modeling and estimation of multi-source clustering in crime and security data},
  author={Mohler, George},
  volume = {7},
  number = {3},
  journal={The Annals of Applied Statistics},
  pages={1525--1539},
  year={2013},
}

@article{kolev_semiparametric_2020,
    author = {Gordon J. Ross and Aleksandar A. Kolev},
    title = {Semiparametric {Bayesian} forecasting of SpatioTemporal earthquake occurrences},
    volume = {16},
    journal = {The Annals of Applied Statistics},
    number = {4},
    pages = {2083 -- 2100},
    year = {2022}
}

@article{ross_bayesian_2021,
	title = {Bayesian Estimation of the {ETAS} Model for Earthquake Occurrences},
	volume = {111},
	number = {3},
	journal = {Bulletin of the Seismological Society of America},
	author = {Ross, Gordon J.},
	year = {2021},
	pages = {1473--1480}
}

@article{guo_who_2019,
	title = {Who is answering whom? {Finding} “Reply-To” relations in group chats with deep bidirectional {LSTM} networks},
	volume = {22},
	journal = {Cluster Computing},
	author = {Guo, Gaoyang and Wang, Chaokun and Chen, Jun and Ge, Pengcheng and Chen, Weijun},
	year = {2019},
	pages = {2089--2100},
}

@inproceedings{li_brunc_2020,
  author    = {Hui Li and Hui Li and Sourav S. Bhowmick},
  title     = {{BRUNCH}: Branching Structure Inference of Hybrid Multivariate Hawkes
               Processes with Application to Social Media},
  booktitle = {Advances in Knowledge Discovery and Data Mining -- 24th Pacific-Asia
               Conference, {PAKDD} 2020, Singapore, May 11--14, 2020,
               Proceedings, Part {I}},
  series    = {Lecture Notes in Computer Science},
  volume    = {12084},
  pages     = {553--566},
  publisher = {Springer},
  address   = {Cham},
  year      = {2020},
}

@article{mannell_plural_2020,
  title={Plural and porous: reconceptualizing the boundaries of mobile messaging group chats},
  author={Mannell, Kate},
  journal={Journal of Computer-Mediated Communication},
  volume={25},
  number={4},
  pages={274--290},
  year={2020}
}

@phdthesis{markwick_bayesian_2020,
	title = {Bayesian {Nonparametric} {Hawkes} {Processes} with {Applications}},
	school = {University College London},
	author = {Markwick, Dean},
	year = {2020}
}

\end{document}